\documentclass[acmsmall]{acmart}
\AtBeginDocument{%
  }

\setcopyright{acmlicensed}
\copyrightyear{2018}
\acmYear{2018}
\acmDOI{XXXXXXX.XXXXXXX}

\acmJournal{JACM}
\acmVolume{37}
\acmNumber{4}
\acmArticle{111}
\acmMonth{8}

\usepackage{amsthm}
\usepackage{algorithmic}
\usepackage[linesnumbered, vlined, ruled]{algorithm2e}
\usepackage{graphicx}
\graphicspath{{fig}{./fig/}}
\usepackage{subcaption}
\usepackage{soul} 
\usepackage[dvipsnames]{xcolor}

\newcommand{\annotate}[1]{#1}

\setstcolor{red}

\SetCommentSty{mycommfont}

\begin{document}

\title{Bounded Priority-Aware Locking for Real-Time Kernels}

\author{Shriram Raja}
\email{shriramr@bu.edu}
\orcid{0000-0002-3123-3486}
\author{Richard West}
\email{richwest@bu.edu}
\orcid{0000-0001-5100-0666}
\affiliation{%
  \institution{Department of Computer Science, Boston University}
  \city{Boston}
  \state{Massachusetts}
  \country{USA}
}


\begin{abstract}
 A real-time multicore system requires delay bounds on access to
 shared resources. These resources include the kernel, which has
 potentially many non-preemptible critical sections guarded by one or more
 different synchronization primitives. While primitives such as FIFO locks bound
 the waiting time to enter a critical section, they do not distinguish
 the importance of individual tasks competing for shared resource
 access. To address this, we consider a priority-aware spinlock, which
 reduces the average delay of more important tasks while maintaining a
 worst-case bound on lock waiting time.

 We propose a Batched Priority Lock (BPL) that first groups waiting
 tasks based on the order of their lock requests, and then determines the
 next lock holder according to priority within the waiting group.  We
 compare BPL to alternative lock approaches, showing that the average waiting
 time is reduced for higher priority tasks, in simulations up to 64 cores, and
 for a working implementation on an 8-core machine with a real RTOS. BPL is a
 compromise between strict priority and FIFO ordering. While strict
 priorities may lead to starvation and, hence, unbounded lock
 acquisition delays, BPL has the same waiting bound as FIFO, but with
 benefits to higher priority tasks. Although its complexity is greater
 than that of a simple spinlock, its common case execution overhead is
 shown to be inexpensive in a working system. We believe this is 
 an acceptable cost in systems that require predictability.
\end{abstract}

\begin{CCSXML}
<ccs2012>
    <concept>
        <concept_id>10010520.10010570.10010571</concept_id>
        <concept_desc>Computer systems organization~Real-time operating systems</concept_desc>
        <concept_significance>500</concept_significance>
        </concept>
    <concept>
        <concept_id>10011007.10010940.10010941.10010949.10010957.10011678</concept_id>
        <concept_desc>Software and its engineering~Process synchronization</concept_desc>
        <concept_significance>500</concept_significance>
        </concept>
    </ccs2012>
\end{CCSXML}

\ccsdesc[500]{Computer systems organization~Real-time operating systems}
\ccsdesc[500]{Software and its engineering~Process synchronization}

\keywords{Real-Time Operating Systems, Synchronization}


\maketitle

\section{Introduction}\label{sect:introduction}
Synchronized access to shared resources is a fundamental problem faced
by operating systems. It is especially challenging in real-time
systems, which must meet strict timing requirements. In general, the
worst-case waiting time to exclusively access a shared resource
increases with the number of concurrent tasks competing for that
resource, which leads to potential deadline misses. Moreover,
important tasks might be delayed by those having less importance as a
result of priority inversion. A poorly designed locking mechanism only
exacerbates these problems, motivating us to consider practical
solutions that build upon the state-of-the-art. This paper, therefore,
focuses on the design of spinlocks that ensure bounded delay and
minimize priority inversion. Although spinlocks are generally
applicable to multicore systems, we primarily consider access to
non-preemptible shared kernel control paths and data structures, by
entry points such as system calls and interrupts in a real-time
system.

An {\em unordered} spinlock is the simplest mechanism to serialize accesses
to a shared resource. Only one task is allowed to change the lock
status and access the resource, while all other tasks busy wait until
the lock is free. Highly contended resources increase the likelihood
of multiple waiters. In such cases, it is not clear which waiter
acquires the spinlock on its release, leading to variable and
potentially high wait times. 

In contrast, FIFO-ordered locks have been used to bound the waiting time
of lock contenders~\cite{mcs_lock, clh_c,clh_lh}. When a resource is shared by
all cores in an $m$-core system, FIFO locks with non-preemptible critical
sections provide a worst-case waiting time of $m-1$ times the worst-case
duration of the critical section. Since FIFO locks only consider order of
arrival, the average waiting time of tasks of all priorities is the same. 

Typically, worst-case delay is the only factor considered when
determining if all tasks meet their deadlines. However, there is value
in improving the average-case performance of higher priority tasks, to
allow them to make more progress or finish sooner. Allowing more
important tasks to complete earlier, or perform additional work beyond
their minimal execution requirements, may lead to improved system-wide
quality-of-service. For example, a task that performs a numerical
integration might operate at a mandatory worst-case sampling rate to
ensure minimal accuracy, with any additional cycles used for
over-sampling, to optionally reduce integration error. This is similar
to the work on imprecise computations~\cite{Liu:91,Liu:94}. To that end, we
propose the use of a priority-aware lock that provides the same
worst-case bound as FIFO but prioritizes among tasks to improve the
average case performance of more important tasks.

Several priority queue-based locks have previously been
proposed~\cite{bburg_lockreview}, which support preemption of waiters and
timeouts. However, priority queue-based locks either: (1) require the releasing
task to traverse a queue to determine the next lock holder, extending the
critical section, or (2) incur queue insertion delays during lock acquisition,
which cause unbounded waiting in pathological
cases~\cite{priority_mutual_exclusion}.
To avoid queue insertions of waiting tasks at acquisition time, while
still ensuring a constant release time, we propose a
two-stage \emph{Batched Priority Lock} (BPL): first, waiting tasks are
batched in order of arrival, and then contend to resolve the waiter
with the highest priority among those in the same batch. A strictly
priority-ordered lock may result in starvation of low priority
tasks. However, by first considering the order of lock request, and
then the priority, BPL ensures progress of all waiters. Any tasks that
make a request for a lock held by the same holder are batched
together, allowing precedence to be given to the highest priority
waiter in that group when the lock is released.

{\bf Contributions:} We describe the implementation of BPL, and
compare against FIFO, strictly priority-ordered, and simple
spinlocks. Using simulations for systems up to 64 cores, and
implementations of the different locks in our in-house Real-Time Operating System Quest,
we determine: (1)
the overheads of each lock, (2) their effectiveness at eliminating
priority inversions, and (3) the conditions under which BPL
proves more advantageous than other locks.

The rest of this paper is organized as follows:
Section~\ref{sect:resource_share} explains the issues with using
priority-unaware spinlocks in a multicore real-time
system. Section~\ref{sect:kernel_latency} gives an overview of our
real-time kernel and lists the desired properties of a kernel
lock. The design of the Batched Priority Lock is described in
Section~\ref{sect:lock} and analyzed in Section~\ref{sect:analysis}.
Section~\ref{sect:exp_eval} evaluates different lock
approaches. Related work is discussed in Section~\ref{sect:rel_work},
followed by conclusions in Section~\ref{sect:conclusion}.

\section{Resource Sharing in a Multicore RTOS}\label{sect:resource_share}
We consider a multi-core real-time system with $m$ cores where tasks are
scheduled using globally consistent, statically assigned priorities.
One
instance of priority inversion is said to occur in a real-time system when a
task is unable to make progress due to the execution of one lower priority task.
Multicore real-time systems with globally consistent priorities suffer from: (1)
\emph{local} priority inversion between tasks on the same core, and (2)
\emph{global} priority inversions between tasks on different cores. Unbounded
local priority inversion is avoided by ensuring that critical sections are
non-preemptible, either by boosting the priority of the lock holder to the
highest priority in the system for the duration of the critical section, or by
disabling interrupts. However, even non-preemptible critical sections guarded by
a simple spinlock or FIFO-ordered lock may suffer global priority inversion as
these locks do not give precedence to higher priority tasks as illustrated by
the following example.

\begin{figure}[!htb]
  \centering
  
  \includegraphics[width=0.5\linewidth]{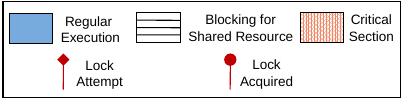}
  
  \vspace{0.3cm}
  
  \begin{subfigure}[b]{0.48\linewidth}
    \centering
    \includegraphics[width=\linewidth]{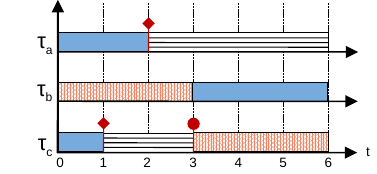}
    \caption{Priority Inversion due to non-priority-ordered locks: $\tau_a$
    experiences one instance of priority inversion from $t = 2$ till $\tau_c$
    completes its critical section}
    \label{fig:p_unaware_lock}
  \end{subfigure}
  \hfill
  \begin{subfigure}[b]{0.48\linewidth}
    \centering
    \includegraphics[width=\linewidth]{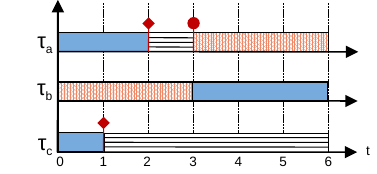}
    \caption{Priority Inversion avoided with priority-aware lock: $\tau_a$
    acquires the lock at $t = 3$ \newline }
    \label{fig:p_aware_lock}
  \end{subfigure}
  
  \caption{Example of Global Priority Inversion in Multicore Real-Time Systems}
  \label{fig:p_inv}
  \Description{Example of Global Priority Inversion in Multicore Real-Time Systems}
\end{figure}

{\bf Example:} Consider a 3-core system shown in
Figure~\ref{fig:p_unaware_lock}. 
Suppose task $\tau_a$ has the
highest priority, $\tau_c$ the lowest priority, and $\tau_b$ the
middle priority, with all tasks assigned to different cores. At time
$t = 1$, $\tau_b$ is in its critical section. $\tau_c$ and $\tau_a$
attempt to acquire a shared FIFO lock at times $t = 1$ and $t = 2$,
respectively. $\tau_b$ exits its critical section at time $t = 3$, but
since the FIFO lock does not enforce
priority ordering, $\tau_c$ acquires the lock, resulting in the
highest priority task, $\tau_a$, waiting at least the length of
another critical section. If a priority-aware lock were used, as shown
in Figure~\ref{fig:p_aware_lock}, the highest priority task $\tau_a$ will
acquire the lock at $t = 2$. Extending this situation to
a lock contended by $m$ tasks running on separate cores of an $m$-core system, a
FIFO lock will cause the last arriving waiter to be delayed by $m-1$ other
tasks, irrespective of their priority. The challenge, then, is to
implement a priority-based lock that reduces the waiting time of the most
important tasks, but also ensures a bounded delay for all waiters. 

There exist several priority-aware locks that use queue data structures to
ensure that only the highest priority contender is allowed to acquire the lock
when it is released~\cite{clh_c,clh_lh,markatos_plock,o1_spinlock}. These locks
are broadly classified into two categories:

\begin{itemize}
\item {\em Release-prioritized locks}: this category of locks uses a simple FIFO
queue for waiters that attempt to acquire a lock. However, when the lock holder
completes its critical section, it traverses the entire queue to identify the
highest priority waiter, and passes ownership of the lock to that task. While
this approach is simpler to implement, it effectively extends each critical
section by the time taken to determine the next highest priority waiter and
atomically remove it from the FIFO queue. This extra cost is part of the lock
release operation by the current holder.

\item {\em Acquire-prioritized locks}: these use a priority queue to order
waiters as they attempt to acquire a lock. The releasing task is then able to
quickly identify the next task to take lock ownership, avoiding costly delays
being added to its critical section. In effect, all waiters cooperate to
identify the highest priority task while the lock is in use. However,
complications arise in the organization of the priority queue, which is
performed by a collection of concurrent waiters. Huang and
Jayanti~\cite{priority_mutual_exclusion} identify situations where a task fails
to enqueue itself into the priority queue due to repeated failed transactions.
\end{itemize}

David et al.~\cite{everything_synch} conduct a thorough study on
synchronization, and show that simple locks are often preferred even in cases of
high contention. In contrast, complex locks often incur costs without providing
net benefits. We postulate that a lock designed for a real-time operating system
must:
\begin{itemize}
  \item ensure mutual exclusion with minimal overhead, at least in the common
  case, using widely available hardware instructions,
  \item minimize the waiting time of high priority tasks by bounding priority
  inversion, and
  \item provide progress guarantees to avoid starvation of low priority tasks.
\end{itemize}

\section{Kernel Latency in a Multicore RTOS}\label{sect:kernel_latency}
Symmetric multiprocessing (SMP) features two or more processors, or
cores, which are connected to a shared main memory managed by a single
operating system. The trusted {\em kernel} component of such an
operating system is considered a shared resource that is accessed by
tasks on different cores using system calls. These system calls
require access to synchronization primitives, to exclusively update
shared kernel state or avoid contention on access to operating system
resources that must be serialized. Counting semaphores, mutexes,
blocking and spinning locks are commonly used synchronization
primitives. Such primitives may be used to guard access to the entire
kernel (e.g., a big kernel lock) or may be used to implement mutually
exclusive access to small regions of kernel code (critical sections)
guarded by separate fine-grained locks. Regardless of coarse or
fine-grained locking, it remains a challenge to avoid unbounded wait
time when acquiring access to shared resources, while also giving
precedence to higher priority tasks.

In this work, we consider Quest RTOS as a case study. Scheduling in Quest is
done using the concept of bandwidth preserving sporadic
servers~\cite{sporadic_server} implemented using the Virtual CPU (VCPU)
abstraction proposed by Danish et al~\cite{vcpu}. One or more tasks are
assigned to a given VCPU, having an execution budget $C$ every period $T$. Each
per-core local scheduler selects the highest priority runnable VCPU, having a
non-zero budget at the current time. The next task to run using this VCPU is
dispatched on the local processor core. The VCPUs are partitioned across cores,
but are allowed to migrate for load balancing purposes. On each core, the
Rate-Monotonic algorithm~\cite{rms} is implemented to schedule the VCPUs.
Either FIFO or some form of priority-based scheduling is used to select the
next task to run on a VCPU shared by more than one task. The Liu-Layland
test~\cite{rms} is used to verify the schedulability of the VCPUs assigned to a
core. Hence, Quest can be said to use Semi-Partitioned Fixed-Priority
scheduling. Similar to UNIX-like systems, Quest uses segmentation to separate
the kernel from user-space.

\begin{figure*}[!ht]
    \centering

    \includegraphics[width=0.48\textwidth]{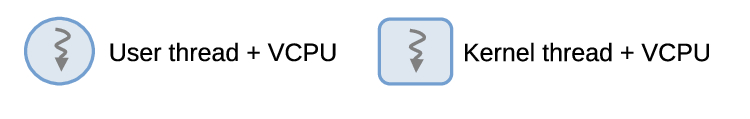}

    \begin{subfigure}{0.48\textwidth}
        \includegraphics[width=0.96\linewidth]{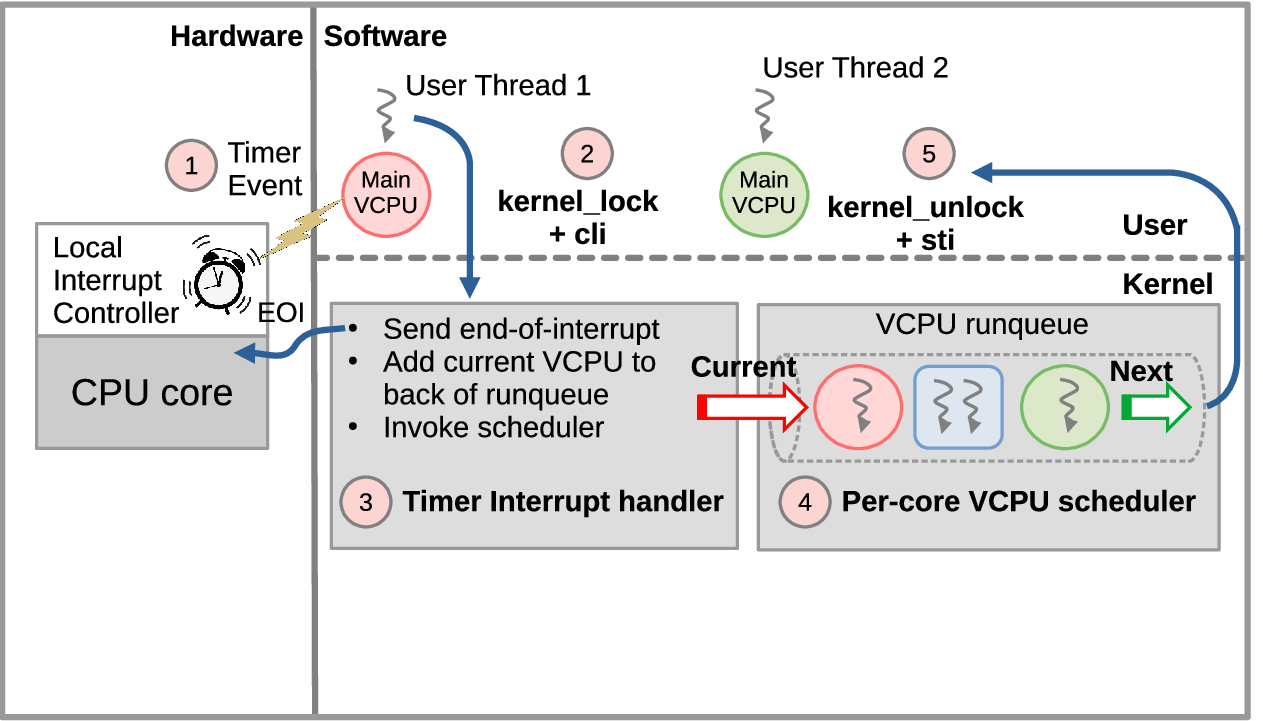}
        \caption{User to User Task Transition}
        \label{fig:u2u}
    \end{subfigure}
    \begin{subfigure}{0.48\textwidth}
        \includegraphics[width=0.96\linewidth]{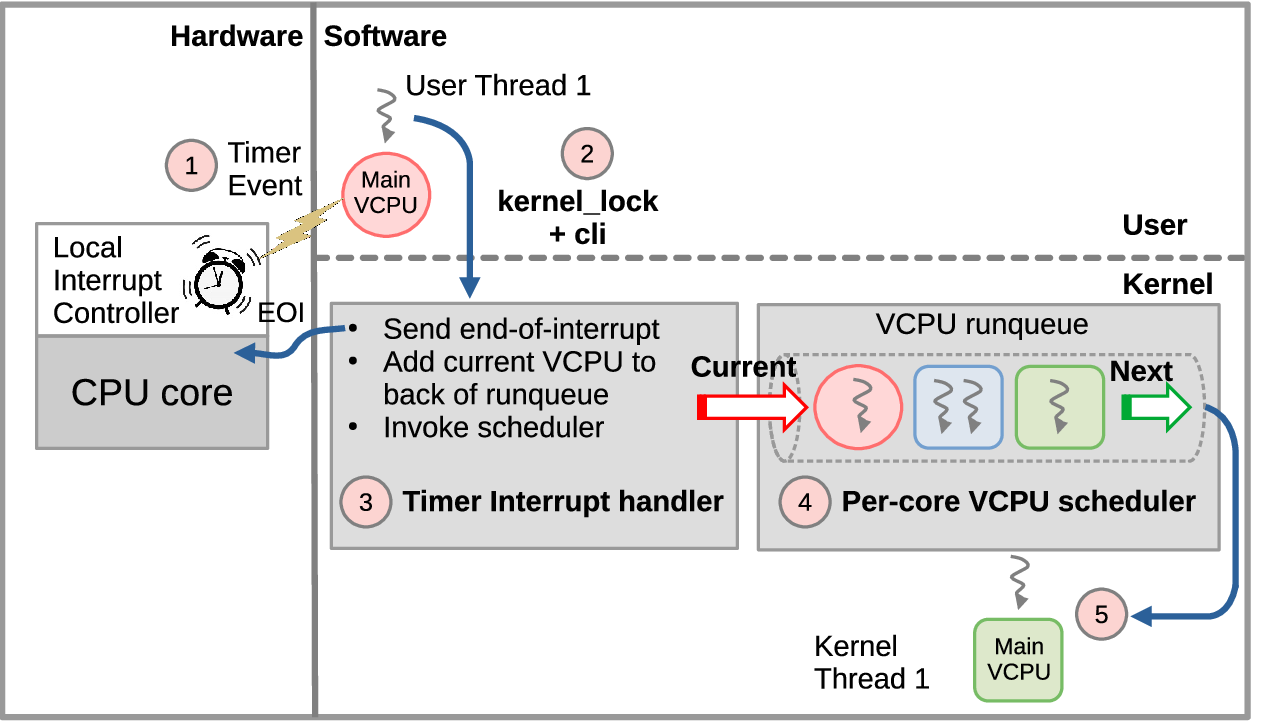}
        \caption{User to Kernel Task Transition}
        \label{fig:u2k}
    \end{subfigure}

    \begin{subfigure}{0.48\textwidth}
        \includegraphics[width=0.96\linewidth]{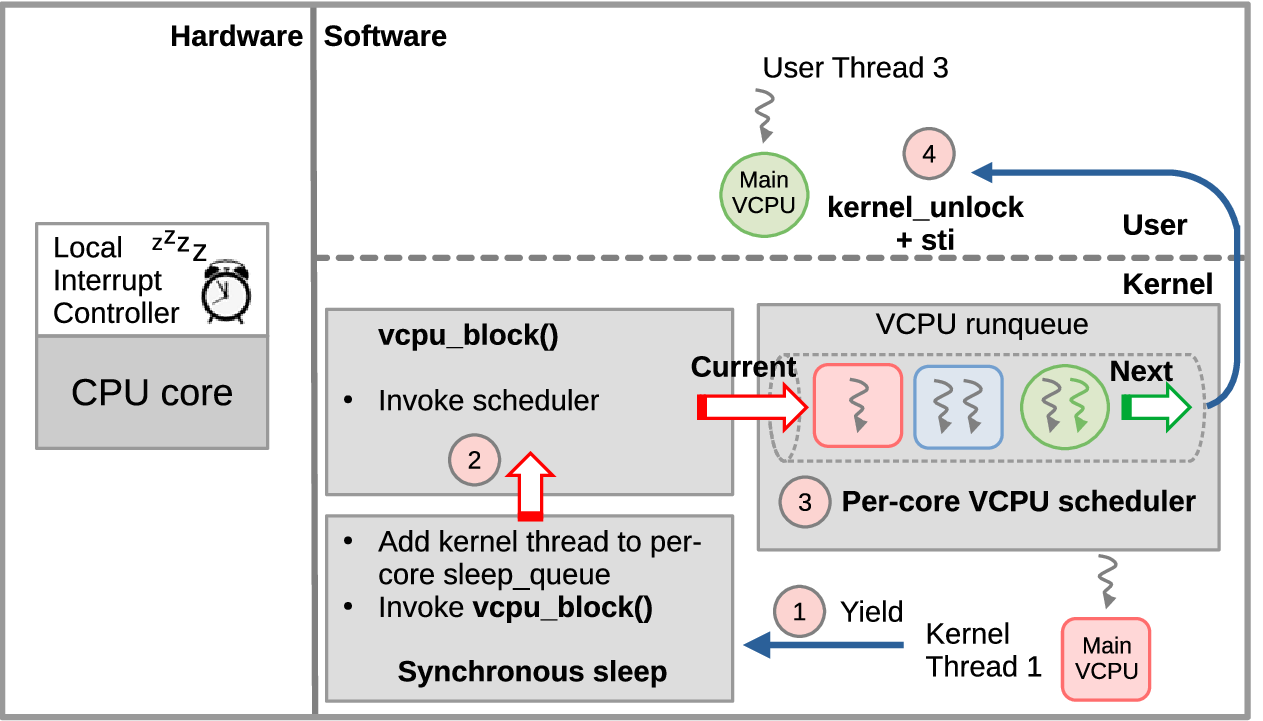}
        \caption{Kernel to User Task Transition}
        \label{fig:k2u}
    \end{subfigure}
    \begin{subfigure}{0.48\textwidth}
        \includegraphics[width=0.96\linewidth]{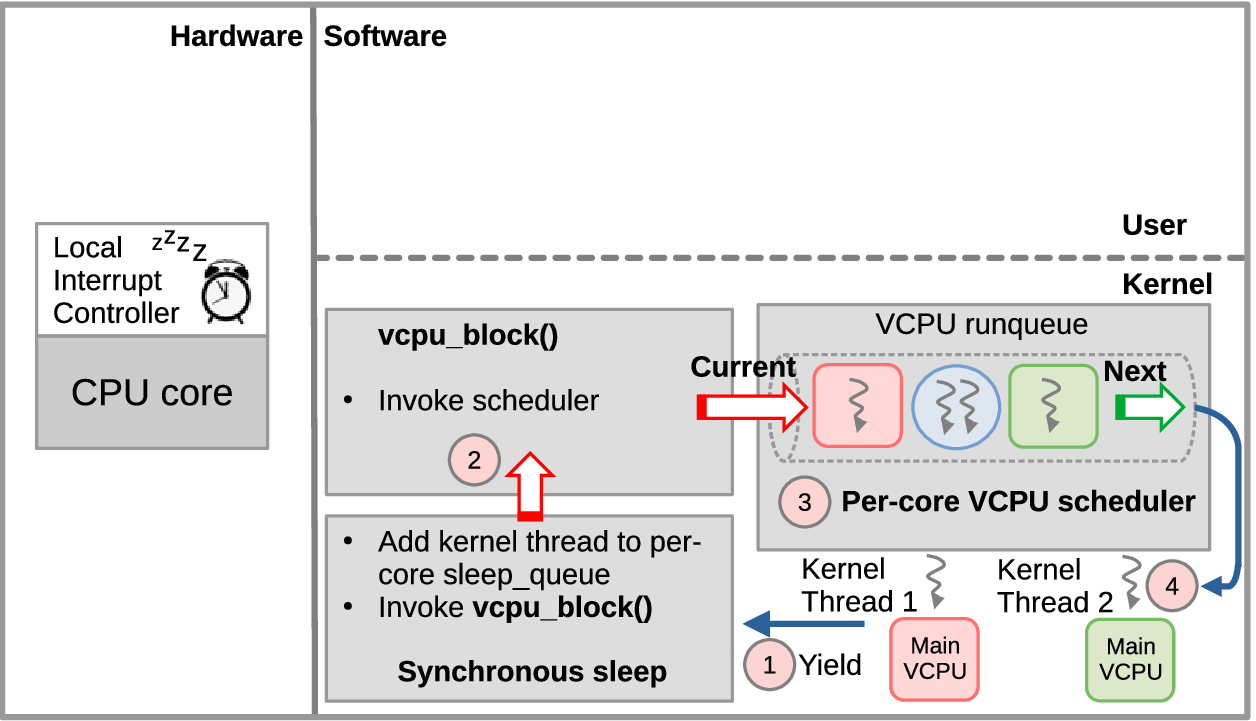}
        \caption{Kernel to Kernel Task Transition}
        \label{fig:k2k}
    \end{subfigure}

    \caption{Scheduler Paths in Quest RTOS}
    \label{fig:rtk_kernel_paths}
    \Description{Scheduler Paths in Quest RTOS}
\end{figure*}

\annotate{Figure~\ref{fig:rtk_kernel_paths} shows the different control
paths through the kernel:}

\annotate{\textbf{User Task to User Task} (Figure ~\ref{fig:u2u}): User-space tasks are
interrupted by the local APIC (LAPIC) timer, either due to preemption or the
expiration of the corresponding VCPU budget. Further interrupts are
disabled before control passes to the handler. The handler first sends an
end-of-interrupt signal to the timer and then acquires the kernel lock. After
processing the sleepqueue of the system, the current task is placed in the back
of the core-local (per-cpu) run queue and then the scheduler is called. The
scheduler chooses the highest priority VCPU with a non-zero budget, and
control is switched to the next task in the chosen VCPU.  Finally, the kernel
lock is released and interrupts are re-enabled as control passes to the
user-space.}

\annotate{\textbf{User Task to Kernel Task} (Figure ~\ref{fig:u2k}): Similar to the
previous case, the user task is interrupted, leading to control passing to the
LAPIC timer handler. The only difference is, in this case the VCPU chosen to run
next has a kernel task mapped to it. Hence, the kernel task immediately proceeds
while still holding the kernel lock and with interrupts disabled.}

\annotate{\textbf{Kernel Task to User Task} (Figure ~\ref{fig:k2u}): Since kernel tasks
run with interrupts disabled, control switches to another task only when a
kernel task yields to the scheduler using a {\tt vcpu\_block()} or a
{\tt sleep()} call. If a user-space task is mapped to the next runnable VCPU, the
kernel lock is released and interrupts are re-enabled.}

\annotate{\textbf{Kernel Task to Kernel Task} (Figure ~\ref{fig:k2k}): When switching from
one kernel task to another, the interrupts remain disabled and the kernel lock
is held throughout, as the flow of control passes from the first kernel task, to
the scheduler and finally to the next kernel task.}

\annotate{For cases where a user task issues a blocking system call, control will first pass to the kernel, acquiring the kernel lock and disabling interrupts. A subsequent switch to either a kernel or user task from this point is similar to the control flow originating from a kernel task. }

Quest uses a big kernel lock to enforce mutual exclusion in critical
kernel code. This simplifies system design and avoids potential
``hold and wait'' deadlocks caused by fine-grained locking of
different kernel control paths guarded by separate synchronization
constructs. While fine-grained locking may improve scalability, it
also introduces the following problems:\begin{itemize}
    \item \textbf{Formal Verification of Correctness.} In their survey on Microsoft
platforms that use concurrency~\cite{ms_conc_survey}, Godefroid and Nagappan
found that 66\% of the 684 respondents dealt with concurrency issues.
While this study does not focus on real-time kernels, use of
several fine-grained locks in the kernel also makes the paths through the kernel
more complex. This in turn makes formal verification of system correctness very
complicated. 
    \item \textbf{Worst-Case Timing Analysis.} When the entire kernel is
considered as a single critical section, the worst-case timing analysis of tasks
is simplified. In contrast, separate locks within the kernel leads to
potentially interleaved, or nested, critical sections, adding to the complexity
of worst-case timing analysis.
\end{itemize}

There is a trade-off to be made here between scalability and system
complexity.  Also, it must be noted that predictability is of higher
importance than scalability for a real-time system. In a system with
$m$ cores, the maximum number of non-preemptible threads that could
contend for the kernel lock is limited to $m$. Thus, when $m$ is
generally small, the level of contention is lower. Even if a real-time
system were to require many cores, it is possible to split the cores
into separate groups, e.g., by using multiple guests running on a
paritioning hypervisor~\cite{quest-v,bao}. Only one group of cores
assigned to a guest would then need to contend for the same lock. \annotate{This is
desirable in mixed-criticality systems as it allows for greater temporal and
spatial isolation of tasks of different criticalities.}

We conclude this section with some factors to consider when
designing a kernel lock for a real-time kernel such as Quest and a
discussion of the system model. 
To improve the predictability of the kernel lock, it needs to have
awareness of the priority of the task attempting to acquire the lock, i.e.,
it needs to be aware of the period of the VCPU that the task is mapped to.
Note that this paper only considers one-to-one mapping between tasks and
VCPUs\footnote[1]{We henceforth use tasks and VCPUs
interchangeably}. However, since all tasks mapped to a VCPU will have the
same period and budget, i.e., that of the VCPU, this does not have any impact
on the working of a predictable kernel lock. 
In certain cases it might be beneficial to allow tasks waiting for a
shared resource to suspend as it allows potentially lower priority tasks to
make progress. 
\annotate{A spinlock is generally used when the worst-case time required to
acquire the lock is smaller than the time taken to suspend the current waiter
and switch to a different task.}
However, since the scheduler is itself in the critical section,
and hence is accessed only after acquiring the kernel lock, there is no benefit in allowing suspension of contenders.
Thus, a modified spinlock that has priority awareness is sufficient.

\begin{figure}[!htbp]
  \includegraphics[width=0.6\linewidth]{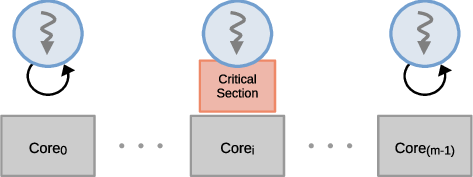}
  \caption{System Model}
  \label{fig:system_model}
  \Description{System Model: one task holding the lock on Core$_i$ and one waiter spinning on each of the $m-1$ other cores}
\end{figure}

{\bf System Model:} Given the aforementioned design requirements, we arrive at
the system model shown in Figure~\ref{fig:system_model}. We consider a
real-time system with $m$ homogeneous cores that uses spinlocks to guard
\annotate{the entire kernel. As the
scheduler itself is a part of the critical section, a} task waiting for, or holding a spinlock
on one of the cores cannot be preempted to allow another task to execute in its
place. Each task is characterized by a priority, $P$, and the index $i$ of its
current processor core. Task priorities are represented by unsigned integers in
the range $[P_{max}, P_{min}]$, with a lower number indicating a higher
priority level. The core index is a value in the range $[0,m)$. The system is
free to use static or dynamic priorities, under the condition that task
priorities are not modified while waiting to acquire a lock, or while executing
a critical section. The system requires a consistent notion of priority, with
global or semi-partitioned scheduling being allowed, given that task migrations
are permitted across lock invocations.

\section{Batched Priority Locking (BPL)}\label{sect:lock}
The Batched Priority Lock (BPL), shown in Algorithms~\ref{alg:BPL_lock}, and~\ref{alg:BPL_unlock},
aims to ensure bounded waiting time, while distinguishing between task
priorities among the same group of contenders. The following is a high level
overview of the algorithm:
\begin{itemize}
\item[1.] {\em Fast Path}: A task first checks if there are any other waiters,
and if there are none, it attempts to acquire the lock. This fast path enables
fast uncontested acquisition. If there are other waiters, or if the lock
acquisition fails, the task proceeds to the next step.
\item[2.] {\em Batching Stage} ($stage\_0$): All waiters in this stage obtain a
batch ID. Waiter(s) with the lowest batch ID, representing those that arrived
the earliest, are allowed to proceed to the next stage. Other waiters continue
to spin in the same stage. 
\item[3.] {\em Priority Ordering Stage} ($stage\_1$): Tasks in this stage
contend to determine the highest priority level (lowest $P$ value). Once they
settle on an agreed lowest $P$ value, all tasks of that priority proceed to the
next stage.
\item[4.] {\em Mutual Exclusion Stage}: In this final stage, the highest
priority task(s) among those that arrived in the earliest batch contend to
acquire the lock. 
\end{itemize}

The algorithm takes as inputs a pointer to a lock object ($L$), the
priority of a given task ($P$), and the index ($i$) of its corresponding
processor core.

\subsection{Attributes of the BPL Object}
\begin{itemize}
  \item $num\_waiters$ records the number of tasks contending for the
  lock and is initialized to 0. This value is tracked to choose the
  fast path of the lock in the absence of any contention. 

  \item $batch\_barrier$ and $priority\_barrier$ are key variables used by the
  tasks in $stage\_0$ and $stage\_1$. $batch\_barrier$ is set to the lowest batch ID
  of all waiters in $stage\_0$, while $priority\_barrier$ holds the highest priority
level of all waiters in $stage\_1$. Both barriers are initialized to
-1u, i.e.,
  the highest unsigned integer value. 

  \item $settling$ is an array holding two bitvectors, used to check if a task
  on a specific core is contending in a particular stage. For instance, if
  the $i^{th}$ bit of $settling[1]$ is set, it means there is a
  task on core $i$ that is contending in $stage\_1$. Initially, both
  the settling bitvectors are 0. 

  \item $curr\_batch$ is initialized to 0 and is split into two bit-fields: the
  lower $k$ bits track the number of tasks in a batch, and the remaining bits
  denote the batch ID for tasks entering $stage\_0$. A batch lasts for the
  duration of one critical section. Therefore, before a holder releases the 
  lock, it clears the lower $k$ bits and adds $2^k$ to $curr\_batch$, to start 
  a new batch ID. $curr\_batch$ is reset to zero only when there is no 
  contention for the lock. After reading $curr\_batch$ into a temporary 
  variable, new waiters shift their copy right by $k$ bits to get their batch
  ID. Note that the maximum number of waiters in a single batch is limited by
  $\min{(m-1,2^{k}-1)}$, where $k$ is set to ${\lceil}log_{2}{(m)}{\rceil}$ for
  an $m$-core system. 

  \item $status$ is a single bit that is set when the lock is held, and is 0
  when the lock is free. Initially the lock status is free.
\end{itemize}

\subsection{Explanation of the BPL Algorithm}

Consider the example shown in Figure~\ref{fig:bpl_working}:

\begin{figure}[!htb]
  \centering
    \includegraphics[width=0.81\linewidth]{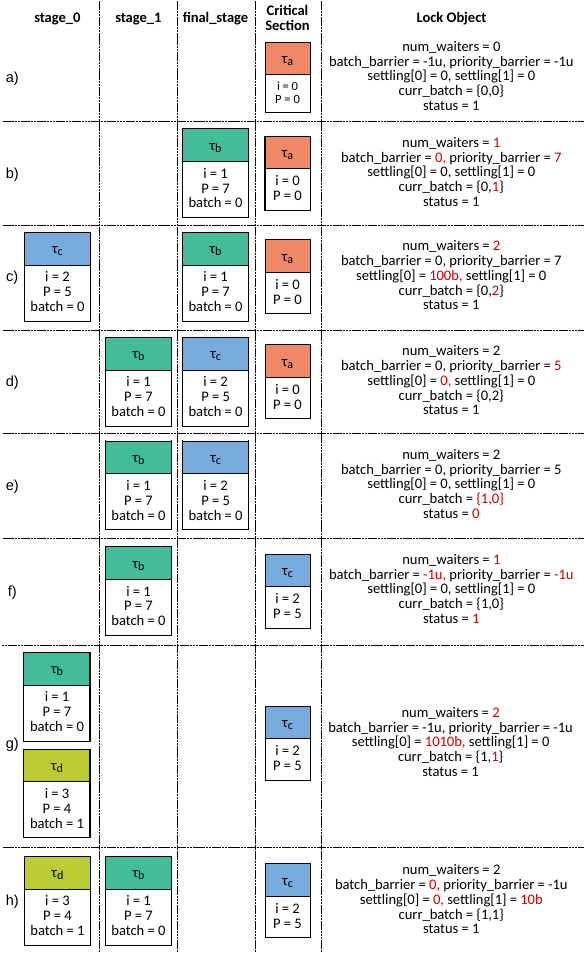}
	\caption{Batched Priority Locking (BPL) Algorithm Example}\label{fig:bpl_working}
  \Description{Example of the Batched Priority Locking (BPL) Algorithm}
\end{figure}

\begin{itemize}
  \item[a)] The lock is held by $\tau_a$, and the parameters of the lock object 
  are in their initial state as $\tau_a$ acquired the lock through the fast 
  path, which is explained in Subsection~\ref{subsect:bpl_impl}. 

  \item[b)] $\tau_b$ arrives next to contend for the lock. As the lock 
  is already held, it cannot execute the fast path. Hence, it increments
  $num\_waiters$, reads $curr\_batch$ to set its local batch variable, 
  and increments the lower field of $curr\_batch$ to 1 to indicate that there 
  is one task in batch 0. Since there are no other tasks in its batch, it is 
  able to set the $batch\_barrier$ and $priority\_barrier$ to its own batch and
  priority respectively, proceeding to the final spinlock stage. Note that it 
  sets bit 1 in the $settling[0]$ and $settling[1]$ bitvectors when it enters 
  $stage\_0$ and $stage\_1$, respectively. However, it resets them when leaving to 
  the next stage and, hence, they are both currently 0.

  \item[c)] Task $\tau_c$ next attempts to acquire the lock. It also increments 
  $num\_waiters$ and the lower field of $curr\_batch$, and then sets bit 2 of 
  $settling[0]$ before entering $stage\_0$. 

  \item[d)] Since $\tau_c$ holds the same batch ID as the $batch\_barrier$ (0),
  it proceeds to $stage\_1$, after resetting $settling[0]$. As its priority is
  higher than the $priority\_barrier$ (recall that lower $P$ value implies higher
  priority), $\tau_c$ updates $priority\_barrier$ to its own priority and proceeds to the final spinlock stage. Meanwhile, while $\tau_b$
  is spinning in the $final\_stage$, it sees that the $priority\_barrier$ no longer holds
  its own priority level, so it then moves back to $stage\_1$. Since both tasks are
  settled in the right stages, the $settling$ bitvectors are reset to 0 at
  this point. Thus, BPL ensures that within the same batch, higher priority 
  tasks acquire the lock first. 

  \item[e)] When $\tau_a$ completes its critical section, it increments the 
  batch ID field of $curr\_batch$ and clears the lower field, before resetting 
  the lock status to 0.

  \item[f)] As the lock is free, $\tau_c$ successfully sets the status bit, decrements $num\_waiters$, and enters its critical section. 
  To allow other tasks in its batch to move on to the final stage, 
  the new holder resets the $priority\_barrier$ before entering its critical section. 
  It could also be the last task in its batch, necessitating
  a reset of the $batch\_barrier$. Thus, whenever a task acquires 
  the lock, it resets both the barriers, forcing all the current waiters
  to reorganize themselves in the stages of the lock. 

  \item[g)] As both the barriers are reset, $\tau_b$ is pushed back to 
  $stage\_0$. At the same time, a new waiter $\tau_d$ arrives to contend for 
  the lock. Hence, $num\_waiters$ is incremented to 2. However, note that the 
  lower field of $curr\_batch$ now only holds the number of waiters in batch 1.
  Both tasks set their corresponding bits in $settling[0]$ and enter 
  $stage\_0$.

  \item[h)] Since $\tau_b$ belongs to an earlier batch than $\tau_d$, it 
  proceeds on to $stage\_1$, while $\tau_d$ stays in $stage\_0$. As it is
  settled in its stage, $settling[0]$ is 0, and only bit 1 is set in 
  $stage\_1$. Though not shown here, $\tau_b$ will then proceed to the final
  spinlock stage and acquire the lock when $\tau_c$ releases it. Thus, BPL 
  enforces FIFO ordering across batches.   
\end{itemize}

\subsection{Implementation of the BPL Algorithm}\label{subsect:bpl_impl}
We now discuss the implementation of the BPL algorithm in detail. In addition to common hardware instructions such as
the atomic \emph{increment (INC)}, \emph{decrement (DEC)},
\emph{store (STORE)},
\emph{set-bit (SET(variable, bit index))} and \emph{reset-bit (RESET(variable,
bit index))}, BPL requires the following atomic transactions:
\begin{itemize}
  \item \emph{test-and-set (TAS(bit))}: The TAS instruction returns 
  the value of $bit$ before setting it.
  \item \emph{compare-and-swap (CAS(variable, old value, new value))}: The CAS 
  instruction compares $variable$ with $old\ value$, and if they are equal, 
  $new\ value$ is stored in $variable$ and it returns $True$. 
  Otherwise, it returns $False$.
  \item \emph{fetch-and-add (FAA(variable, addend))}: The FAA 
  instruction returns the value of $variable$ before adding $addend$ to it.
\end{itemize}

The following is an explanation of Algorithms~\ref{alg:BPL_lock} and~\ref{alg:BPL_unlock} from the
perspective of task $\tau$ that attempts to first acquire and then eventually 
release a lock. 

\textbf{Fast Path (Lines~\ref{line:fast_path_begin}-\ref{line:get_batch_num})}:
If there are no other waiters, $\tau$ attempts to clear the $curr\_batch$ to 0
using an atomic CAS instruction. If there are any waiters when the batch ID is
reset, they could incur significant delays, because lower batch numbers are
first used to decide lock precedence. Hence, $num\_waiters$ has to be read
before attempting to clear $curr\_batch$. Additionally, if $curr\_batch$ only denoted the
value of the batch ID, it would only be updated when a lock holder releases the
lock. In that case, due to pathological interleaving, we could have a scenario
where one waiter clears $curr\_batch$ while another waiter has read a non-zero
batch ID. This could result in the latter waiting for an indefinite amount of
time to acquire the lock. To avoid such a scenario, we split $curr\_batch$ into
the batch ID and number of tasks fields. Thus, during the general flow of
control through the lock acquisition logic, all waiters first increment
$num\_waiters$ and then also modify $curr\_batch$. Any waiter that attempts to
clear $curr\_batch$ reads the two variables in opposite order: reading
$curr\_batch$ first in Line~\ref{line:fast_path_begin} and then $num\_waiters$
in Line~\ref{line:read_num_waiters}. This ensures that any changes made to
either variable are captured when a task attempts to clear it in
Line~\ref{line:reset_curr_batch}.

If the value of $curr\_batch$ changes (as a result of another task) between
Lines~\ref{line:fast_path_begin} and~\ref{line:reset_curr_batch}, the CAS
instruction fails. Irrespective of the success or failure of the CAS
instruction, $\tau$ tries to acquire the lock in
Line~\ref{line:fast_path_end} using an atomic TAS instruction on the
$status$. If this succeeds, $\tau$ jumps to the end of the \verb|lock| code
(Line~\ref{line:reset_barriers}). If this lock attempt fails, or if there are
other waiters, $\tau$ atomically increments $num\_waiters$
(Line~\ref{line:inc_num_waiters}), and then obtains a batch ID. It first
performs an FAA to atomically read and increment $curr\_batch$, and then
right-shifts the value $k$-bits to obtain the batch ID
(Line~\ref{line:get_batch_num}). It then proceeds to $stage\_0$ in the lock
acquisition.

\begin{algorithm}[!htbp]
  \caption{Batched Priority Lock - Lock Function}\label{alg:BPL_lock}
  \SetKwFunction{FLock}{lock}
  \SetKwProg{Fn}{function}{}{}
  {
  \fontsize{9.9}{11.869}\selectfont 
  \Fn{\FLock{struct bpl * L, uint32\_t P, uint32\_t i}}{
    
    prev = L$\rightarrow$curr\_batch;\label{line:fast_path_begin}\\
    \If {L$\rightarrow$num\_waiters == 0\label{line:read_num_waiters}} {
      CAS (L$\rightarrow$curr\_batch, prev, 0);\label{line:reset_curr_batch} \tcp{Fast Path}
      \lIf {!TAS (L$\rightarrow$status)} {goto \hyperref[line:reset_barriers]{acqd}}\label{line:fast_path_end}
    }
    
    INC (L$\rightarrow$num\_waiters);\label{line:inc_num_waiters} \\      
    batch = FAA (L$\rightarrow$curr\_batch, 1);\label{line:inc_curr_batch}\\
    batch = batch $>>$ k;\label{line:get_batch_num}
    
    \tcp{FIFO / Batching Stage}
    {\bf stage\_0}: SET (L$\rightarrow$settling[0], i);\label{line:stage0_begin}

    {\bf read\_batch\_barrier}: prev = L$\rightarrow$batch\_barrier; \label{line:read_barrier_0} \\
    \If {batch $\leq$ prev\label{line:stage0_batch_check} } {
      \If {CAS (L$\rightarrow$batch\_barrier, prev, batch)} {
        RESET (L$\rightarrow$settling[0], i); \label{line:lowest_batch}
        \tcp{control goes to Line~\ref{line:stage0_wait_settle}}
      }
      \lElse {goto \hyperref[line:read_barrier_0]{read\_batch\_barrier}}
    } \Else {
      RESET (L$\rightarrow$settling[0], i);
      goto \hyperref[line:read_barrier_0]{read\_batch\_barrier};\label{line:not_lowest_batch}
    }

    \lWhile {L$\rightarrow$settling[0] != 0} {}\label{line:stage0_wait_settle}
    \lIf {L$\rightarrow$batch\_barrier != batch} {goto \hyperref[line:stage0_begin]{stage\_0}}\label{line:barrier_0}
    
    \tcp{Priority Ordering Stage}
    {\bf stage\_1}: SET (L$\rightarrow$settling[1], i);\label{line:stage1_begin} \\
    
    {\bf read\_priority\_barrier}: prev = L$\rightarrow$priority\_barrier;\label{line:read_barrier_1} \\

    \If {L$\rightarrow$batch\_barrier != batch\label{line:stage1_batch_check}} {
      STORE (L$\rightarrow$priority\_barrier, -1u); \\
      RESET (L$\rightarrow$settling[1], i);
      goto \hyperref[line:stage0_begin]{stage\_0};
    }

    \If {P $\leq$ prev\label{line:priority_check}} {
      \If {CAS (L$\rightarrow$priority\_barrier, prev, P)}{
        RESET (L$\rightarrow$settling[1], i);\label{line:highest_priority}
        \tcp{control goes to Line~\ref{line:stage1_wait_settle}}
      }
      \lElse {goto \hyperref[line:read_barrier_1]{read\_priority\_barrier}} 
    } \Else {
      RESET (L$\rightarrow$settling[1], i);
      goto \hyperref[line:read_barrier_1]{read\_priority\_barrier};\label{line:not_highest_priority}
    }

    \lWhile {L$\rightarrow$settling[1] != 0} {}\label{line:stage1_wait_settle}
    
    {\bf final\_stage}:\label{line:finalstage_begin} \tcp{Spinlock} 
    
    \lIf {L$\rightarrow$priority\_barrier != P\label{line:final_P_check}} {goto
    \hyperref[line:stage1_begin]{stage\_1}} 

    \If {L$\rightarrow$batch\_barrier != batch\label{line:final_batch_check}} {
      STORE (L$\rightarrow$priority\_barrier, -1u);
      goto \hyperref[line:stage0_begin]{stage\_0};
    }

    \lIf {!TAS (L$\rightarrow$status)\label{line:final_TAS}} {goto
    Line~\ref{line:dec_counters}}
    \lElse {goto \hyperref[line:finalstage_begin]{final\_stage}}
    
    \tcp{Lock Acquired}
    DEC (L$\rightarrow$num\_waiters);\label{line:dec_counters} \\
    
    {\bf acqd}:\label{line:reset_barriers}
    STORE (L$\rightarrow$priority\_barrier, -1u);
    STORE (L$\rightarrow$batch\_barrier, -1u);
  }
  }
\end{algorithm}

\begin{algorithm}
  \caption{Batched Priority Lock - Unlock Function}\label{alg:BPL_unlock}
  \SetKwFunction{FUnlock}{unlock}
  \SetKwProg{Fn}{function}{}{}
  \Fn{\FUnlock{struct bpl * L}}{      
    new\_val = L$\rightarrow$curr\_batch; \\
    AND(new\_val, (($2^{32}$ - 1) - ($2^k$ - 1))); \\
    ADD(new\_val, $2^k$); \\
    STORE (L$\rightarrow$curr\_batch, new\_val);\label{line:inc_batch} \\
    RESET (L$\rightarrow$status, 0);\label{line:unlock}
  }
\end{algorithm}

\textbf{stage\_0 (Lines~\ref{line:stage0_begin}-\ref{line:barrier_0})}: When
$\tau$ enters $stage\_0$, it sets the bit corresponding to its core
($i$) in the $settling[0]$ bitvector. This bitvector is used to ensure that
a task that is the first to reach the end of $stage\_0$ proceeds
to $stage\_1$ only when there are no other waiters, or all waiters have
settled on a specific value as the lowest batch ID. $\tau$ then reads
$batch\_barrier$, which holds the ID of the earliest batch. If $\tau$
belongs to that batch or an earlier batch
(Line~\ref{line:stage0_batch_check}), it performs an atomic CAS
transaction. Note that this branch can be further split into tasks
in earlier batches ($< batch\_barrier$) and those in the current batch ($=
batch\_barrier$), and allowing only the tasks in earlier batches to perform the
CAS. However, by not splitting them we keep the implementation of the algorithm
in x86 assembly simpler.
If the CAS transaction succeeds, $\tau$ resets its settling bit 
(Line~\ref{line:lowest_batch}), else it tries again. If $\tau$ finds
that it belongs to a later batch, it resets its settling bit, and
continues to check $batch\_barrier$
(Line~\ref{line:not_lowest_batch}). Thus, all waiters will reset their
settling bit either after successfully setting $batch\_barrier$ to their
batch in Line~\ref{line:lowest_batch}, or after finding that they do
not fall into the earliest batch in
Line~\ref{line:not_lowest_batch}. Tasks that succeed in the CAS
operation check the $settling[0]$ bitvector to ensure that all other
waiters have reset their settling bits, i.e., compared themselves with
$batch\_barrier$ before proceeding to the next stage
(Line~\ref{line:stage0_wait_settle}).  Line~\ref{line:barrier_0} acts
as the final step in the barrier, pushing back any tasks that do not
have the lowest batch ID but just happened to finish the transaction
first.

\textbf{stage\_1 (Lines~\ref{line:stage1_begin}-\ref{line:stage1_wait_settle})}:
When $\tau$ enters $stage\_1$ in Line~\ref{line:stage1_begin}, similar to
$stage\_0$, it first indicates its presence by setting the $settling[1]$ bitvector.
As a next step, in addition to reading the value of $priority\_barrier$, which holds
the highest priority level of all tasks in $stage\_1$, $\tau$ also checks if it
still is the lowest batch task (Line~\ref{line:stage1_batch_check}). If not,
$\tau$ clears its settling bit and resets $priority\_barrier$ to -1u (unsigned) before jumping back
to $stage\_0$. The following scenario motivates the need for this check.
Consider multiple tasks of the same highest priority spinning in the final
stage. One of them acquires the lock, hence it first resets
$priority\_barrier$ before also resetting $batch\_barrier$. One or more of the tasks in the $final\_stage$ could be at
Line~\ref{line:final_P_check}. As soon as $priority\_barrier$ is
reset, these tasks will move back to $stage\_1$. In
$stage\_1$, they first read the current $priority\_barrier$ value
(Line~\ref{line:read_barrier_1}). Now, if the check in 
Line~\ref{line:stage1_batch_check} were not present, these tasks would then
immediately try to re-enter the $final\_stage$ by performing a CAS on $priority\_barrier$.
However, there could be other tasks in $final\_stage$ that got pushed all the
way back to $stage\_0$ because they were in Line~\ref{line:final_batch_check}
when
the barriers were reset. To allow these tasks to get equal footing, we perform
this additional check. Also, this check protects against the condition where a very
fast waiter of a newer batch acquires the lock before a slow waiter that belongs
to an older batch. In the next \verb|if-else| block
(Lines~\ref{line:priority_check}-\ref{line:not_highest_priority}), similar to
$stage\_0$, the waiting task performs a CAS on $priority\_barrier$ if its $P$ value is
the smallest in magnitude among all tasks in $stage\_1$. Tasks that succeed in
this CAS then wait until all tasks reset their $settling[1]$ bit in
Line~\ref{line:stage1_wait_settle}.

\textbf{final\_stage
(Lines~\ref{line:final_P_check}-\ref{line:reset_barriers})}: Only tasks that
have the lowest batch value ($batch$) among all contenders and the highest
priority (lowest $P$ value) among those of the same batch should be allowed to
perform the final bit TAS instruction. Hence, when $\tau$ enters the final
stage, before another check of the lock status in
Line~\ref{line:final_TAS}, it also checks $priority\_barrier$
(Line~\ref{line:final_P_check}) and $batch\_barrier$
(Line~\ref{line:final_batch_check}). 
When the lock is freed, the task in the $final\_stage$ succeeds in the
TAS operation. It then decrements the number of waiters, and resets
the barriers, allowing the other waiters to contend to determine the
next holder.

\textbf{Unlock (Algorithm~\ref{alg:BPL_unlock})}: By including all the ordering logic in the acquiring side of the lock,
we simplify the releasing task's requirements. The {\tt unlock}
operation, shown in Algorithm~\ref{alg:BPL_unlock} simply updates the current batch ID and then resets the
status bit (Line~\ref{line:unlock}).  The batch ID is updated by first
reading it into a local variable, resetting the lower $k$ bits,
incrementing the upper $32 - k$ bits by adding $2^k$, and storing it in the lock object, using the atomic STORE instruction. This resets
the count of the number of waiters while incrementing the batch ID by
1, assuming a 32-bit word-size. A CAS is unnecessary here because the
number of waiters cannot be more than $m-1$ and
$k={\lceil}log_2(m){\rceil}$.

\subsection{Discussion}
Irrespective of how many times a waiter is pushed back to $stage\_0$ or $stage\_1$, the
batch ID is always the first parameter used to determine the next waiter. Each
waiter reads the batch ID only once (in Line~\ref{line:inc_curr_batch}). When
there is contention for the lock, i.e., $num\_waiters$ is non-zero, the batch ID
is monotonically non-decreasing. When there is no contention, $num\_waiters
= 0$, and the waiter will immediately acquire the lock. Thus, in both cases, the
lock is acquired in a finite number of steps. The settling
bitvector in each stage ensures that a fast process that completes a stage early
is forced to wait for all other tasks in that stage to also settle. This ensures
that the batch order is followed in practice, providing the worst-case bound of
FIFO locks. 

While a FIFO lock considers only the order of lock request, and a
priority-ordered lock only considers the priority of waiters, BPL
considers both. It reaches a middle ground, minimizing priority
inversions in the average case while still meeting the worst-case
bound of FIFO. In the next section, we analyze the properties of BPL.

\section{Analysis of BPL}\label{sect:analysis}
This section analyzes the key properties of BPL, including the
asymptotic execution costs of an implementation on an \emph{m-core
system}. 

\subsection{Progress Guarantees}

\begin{theorem}\label{th:starvation}
  A Batched Priority Lock is starvation-free.
\end{theorem}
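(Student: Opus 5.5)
Starvation-freedom requires that every task invoking \texttt{lock} (Algorithm~\ref{alg:BPL_lock}) eventually acquires the lock, assuming every holder eventually calls \texttt{unlock} (Algorithm~\ref{alg:BPL_unlock}). Critical sections are non-preemptible and finite, and waiters spin on distinct cores, so at most $m-1$ waiters exist at any time. The plan is to show two things: (i) whenever the lock is free and at least one task is waiting, some waiter acquires it within a finite number of steps; and (ii) a fixed waiter $\tau$ with batch ID $b$ can be overtaken only finitely often, namely at most $m-2$ times.

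\textbf{Finite settling of each stage.} I would first argue that each stage settles in finite time. In $stage\_0$, $batch\_barrier$ only decreases through successful CAS operations between two resets, and the resets happen only at lock acquisition (label acqd). Each CAS strictly lowers the barrier or leaves it unchanged, and there are finitely many distinct batch values among at most $m-1$ waiters. Hence every waiter eventually either wins the CAS with the minimum batch ID or observes a strictly smaller barrier and clears its $settling[0]$ bit. The spin on $settling[0]$ (Line~\ref{line:stage0_wait_settle}) therefore terminates, and only tasks whose batch equals the minimum pass Line~\ref{line:barrier_0}. The same argument applied to $priority\_barrier$ and $settling[1]$ shows that $stage\_1$ settles on the minimum $P$ within the earliest batch. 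With the barriers stable, at least one task reaches the TAS at Line~\ref{line:final_TAS}, and while the lock is free one such TAS succeeds. This establishes (i).

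\textbf{Bounded overtaking.} For (ii), I would use the batch-ID monotonicity noted in the Discussion. While $num\_waiters>0$, $curr\_batch$ is never reset. The fast-path CAS at Line~\ref{line:reset_curr_batch} requires $num\_waiters=0$, which was read after $curr\_batch$, and $\tau$ incremented $num\_waiters$ before its FAA. Each \texttt{unlock} increments the batch ID, so every task arriving after $\tau$'s FAA, while $\tau$ still waits, obtains an ID strictly greater than $b$. Each acquisition by another task resets both barriers, which forces a fresh settling round. By the argument for (i), the winner of that round has the minimum batch ID among the current waiters, so once $\tau$ is present it is always eligible to win. Consequently only tasks with batch ID at most $b$ can precede $\tau$. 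Each such task was already waiting when $\tau$ took its FAA, and there are at most $m-2$ of them. A fast-path acquirer could also slip in, but only when $num\_waiters=0$, which is impossible while $\tau$ waits after its INC. The one remaining case is a task that read $num\_waiters=0$ before $\tau$'s INC and then performed the TAS; this can happen at most once per such task, and that task is itself among the at most $m-2$ others. Hence $\tau$ acquires the lock after at most $m-1$ critical sections, matching the FIFO bound.

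\textbf{Main obstacle.} The hardest step is the interleaving argument around barrier resets. A waiter may sit between Lines~\ref{line:final_P_check} and~\ref{line:final_TAS}, or may have read a stale barrier just before a reset, and could then perform the TAS out of order. I would handle this by first checking the invariant that the stale window is at most one TAS attempt. I would then show that any task passing the checks at Lines~\ref{line:final_P_check} and~\ref{line:final_batch_check} after the most recent reset has batch ID no greater than the minimum current waiter's batch. This relies on the check at Line~\ref{line:stage1_batch_check}, and on the fact that a winning CAS on $batch\_barrier$ is followed by waiting until every settling bit has cleared. The remaining degenerate case, a barrier CAS racing with the reset at label acqd, would be dispatched by noting that the holder performs the resets before entering its critical section. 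Any CAS against the old value therefore fails, or succeeds with a genuine minimum, before the next \texttt{unlock}.
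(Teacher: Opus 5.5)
\textbf{Gap: FIFO order across batches does not follow from interleavings alone.} Your route matches the paper's. Batches are closed by \texttt{unlock}, batch IDs are monotone while $num\_waiters>0$, and FIFO order across batches bounds overtaking. The paper simply asserts the crux: every task arriving during one critical section acquires the lock before any task arriving during the next. You propose to derive this from interleavings alone, and that step fails. The invariant you plan for your ``main obstacle'' is false under arbitrary interleavings. That invariant says any task passing Lines~\ref{line:final_P_check} and~\ref{line:final_batch_check} after the most recent reset has batch ID at most the minimum current waiter's. The claim in (i) that only minimum-batch tasks get past Line~\ref{line:barrier_0} is false for the same reason. The $settling[0]$ barrier waits only for tasks whose bit is currently set. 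A waiter spinning in $final\_stage$, or in the $stage\_1$ read loop, when the barriers are reset has already cleared that bit.

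\textbf{Counterexample on three cores.} Let $\tau$ (batch $b$) spin in $final\_stage$ while another task acquires the lock and resets both barriers.
\begin{enumerate}
\item Before $\tau$ moves, a later-batch waiter $w$ runs. It is spinning at Line~\ref{line:read_barrier_0} with its bit already clear. It reads \texttt{-1u}, CASes $batch\_barrier$ to its own batch, and finds $settling[0]=0$.
\item $w$ does the same in $stage\_1$, passes both final checks, and is delayed just before its TAS at Line~\ref{line:final_TAS}.
\item $\tau$ now runs. It is pushed back to $stage\_0$, restores the barriers to $(b,P_\tau)$, and returns to $final\_stage$.
\item At the next \texttt{unlock}, $w$'s pending TAS wins.
\item The task that just released re-requests, receiving a still larger batch ID, and plays $w$'s role after $w$'s reset.
\end{enumerate}
So $\tau$ takes infinitely many steps and never acquires the lock. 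Your observation that a stale check yields at most one TAS per pass does not help, because a fresh stale pass can be manufactured after every reset.

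\textbf{What is needed instead.} Starvation-freedom, and the $m-1$ bound, are not pure interleaving properties of Algorithms~\ref{alg:BPL_lock} and~\ref{alg:BPL_unlock}. They need the timing assumption the paper makes in its Blocking Delay and Schedulability discussion. There it concedes exactly this violation: a later-batch task passing all stages before an earlier-batch task re-executes Line~\ref{line:stage0_begin}. The paper's proof relies on this assumption implicitly. You should state it explicitly, in two parts:
\begin{itemize}
\item after a reset, earlier-batch waiters re-enter $stage\_0$ and set their bit before a later-batch waiter can traverse both stages;
\item no waiter stalls between its final checks and its TAS for a whole critical section.
\end{itemize}
With that assumption, the rest of your plan is sound.

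\textbf{Two smaller repairs.}
\begin{itemize}
\item Tasks arriving after $\tau$'s FAA but before the next \texttt{unlock} also receive batch $b$. So ``every later arrival gets an ID strictly greater than $b$'' and ``each task preceding $\tau$ was already waiting at its FAA'' are both wrong. The $m-2$ count should instead come from counting distinct cores, as in Theorem~\ref{th:worst-case}. Combine this with the fact that any re-request after a task's own \texttt{unlock} gets an ID above $b$.
\item $priority\_barrier$ is not non-increasing between acquisitions, because of the \texttt{-1u} stores after Lines~\ref{line:stage1_batch_check} and~\ref{line:final_batch_check}. Your $stage\_1$ settling argument therefore needs an extra step showing these stores cease once $batch\_barrier$ is stable.
\end{itemize}
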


\begin{proof}
  Even though it considers priority among waiters in the same batch,
  BPL first groups tasks by the window in which they made their lock
  requests. The previous batch is closed and a new batch is created in
  Line~\ref{line:inc_batch} of Algorithm~\ref{alg:BPL_lock}, when a holder
  increments the batch number just before releasing the lock. Thus,
  BPL ensures that if a task makes a lock attempt at time $t$, it
  will not have to wait for any task that makes a lock attempt after
  $t+CS$, where $CS$ is the duration of the critical section of the
  task holding the lock at time $t$. In other words, all tasks that
  arrive within the same critical section (while the lock is held)
  will acquire the lock before any tasks that arrive in the next
  critical section. Therefore, even in cases of high contention, BPL
  guarantees progress for all waiters.
\end{proof}
 
\begin{lemma}\label{lem:max-batch-size}
  The maximum batch size for a Batched Priority Lock in
  Algorithms~\ref{alg:BPL_lock}, and~\ref{alg:BPL_unlock} is $m-1$.
\end{lemma}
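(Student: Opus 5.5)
The plan is to bound the number of tasks that can ever share one batch ID, using the system model together with the way \emph{curr\_batch} evolves. A batch is the set of waiters that obtained the same value of the upper $32-k$ bits of \emph{curr\_batch} at Line~\ref{line:inc_curr_batch}. That value is changed only by a holder in Line~\ref{line:inc_batch} of Algorithm~\ref{alg:BPL_unlock}, or reset to zero by the fast-path CAS in Line~\ref{line:reset_curr_batch}. So I would first argue that every member of a given batch performed its FAA during one interval in which the same task $\tau_h$ held the lock. If the ID advanced at every release, this would follow at once.

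The catch is that the ID can also be reset to $0$ in the fast path, and that case has to be excluded. A waiter performs the FAA only after failing the fast path or seeing $num\_waiters \neq 0$. A waiter whose FAA read batch ID $b$ has already executed Line~\ref{line:inc_num_waiters}, so $num\_waiters>0$ from that moment until it acquires the lock. Because the fast-path task reads \emph{curr\_batch} before $num\_waiters$, a reset that lands after such an FAA fails: either the CAS sees a changed value, or the waiter was already counted and the branch is skipped. Hence the ID is monotone over the lifetime of any nonempty batch, and every FAA yielding ID $b$ lies between two consecutive executions of Line~\ref{line:inc_batch}. That is, all of them fall inside a single holding interval of some $\tau_h$. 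Making this interleaving argument precise is the step I expect to be the main obstacle, and I would reuse the ordering discussion from Subsection~\ref{subsect:bpl_impl}.

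Next, I count. During $\tau_h$'s critical section, $\tau_h$ occupies its core non-preemptibly. By the system model, each other core carries at most one task that is waiting for or holding the lock, since a spinning task cannot be preempted. Each such task performs the FAA at Line~\ref{line:inc_curr_batch} exactly once per lock attempt. It cannot acquire the lock and re-request within the same holding interval, because the lock is held by $\tau_h$ throughout. Therefore at most one FAA per core other than $\tau_h$'s can return ID $b$, which gives a batch size of at most $m-1$.

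Finally, I check consistency with the counter field. Since $k=\lceil\log_2 m\rceil$, we have $m-1\le 2^k-1$, so the lower field never overflows into the ID bits, and the argument does not circularly rely on the field capacity. The bound is attained when $\tau_h$ holds the lock and each of the other $m-1$ cores issues one request during its critical section, as in Figure~\ref{fig:system_model}, so $m-1$ is indeed the maximum.
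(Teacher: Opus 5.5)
Your plan matches the paper's proof: the holder pins one core, every other core contributes at most one FAA to the batch, and the ID is advanced before $status$ is cleared. Your handling of the fast-path reset is more careful than the paper's. The gap is the sentence ``That is, all of them fall inside a single holding interval of some $\tau_h$,'' and your count rests on it. The span between two executions of Line~\ref{line:inc_batch} is not a holding interval, for three reasons. Algorithm~\ref{alg:BPL_unlock} publishes the new ID while $status$ is still set. The lock may then be free until some waiter's TAS succeeds. A task can execute Line~\ref{line:inc_num_waiters} before the store but Line~\ref{line:inc_curr_batch} after it. FAAs in these windows draw the new ID, and the eventual holder $\tau_h$ can be one of them, so its core is not excluded.

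Concretely, let $m\ge 3$ and let $h$ hold the lock with $num\_waiters=0$. Between Lines~\ref{line:inc_batch} and~\ref{line:unlock} of $h$'s unlock, a task $x$ reads $num\_waiters=0$ and resets $curr\_batch$ at Line~\ref{line:reset_curr_batch}. It fails the TAS at Line~\ref{line:fast_path_end} because $status$ is still set, and draws ID $0$. A task $y$ then sees $num\_waiters=1$ and also draws ID $0$. Now $h$ clears $status$ and $x$ acquires the lock. During $x$'s critical section, $y$ keeps $num\_waiters\ge 1$. Tasks on the remaining $m-2$ cores, including $h$ re-requesting, therefore skip the fast path and draw ID $0$ as well. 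That gives $m$ tasks with the same ID. For $m$ a power of two the count field reaches $2^k$ and carries into the ID bits, so your final overflow check fails too.

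Run over the whole span between consecutive stores, your per-core count only gives one FAA per core, i.e.\ $m$. To get $m-1$ you would need the eventual holder never to have drawn the ID itself, and Algorithms~\ref{alg:BPL_lock} and~\ref{alg:BPL_unlock} do not guarantee this. The paper's own proof has the same hole: it asserts that the holder entered through the fast path and never read $curr\_batch$, which is false for $x$ above. So, counting tasks that draw a given ID (which is what the lower $k$ bits record), the statement as written needs either the bound $m$ or a change to the algorithm. Making the unlock atomic would not suffice on its own, because tasks can still execute Line~\ref{line:inc_num_waiters} before the release and Line~\ref{line:inc_curr_batch} after it, with the same effect.
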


\begin{proof}
  In an $m$-core system that uses spin-based locks to guard
  non-preemptible critical sections, no more than $m$ tasks can
  contend to acquire a free lock. This is because neither the
  waiting tasks nor the lock holder are preemptible. One of the $m$
  contenders will succeed in acquiring the lock through the fast
  path in Line~\ref{line:fast_path_end}, and hence will not read the
  batch ID. In the worst-case, up to $m-1$ tasks will fetch and
  increment $curr\_batch$ while waiting for the lock. With
  $m-1$ waiters and $1$ holder, a new contender can only arrive
  after the holder releases the lock. By updating $curr\_batch$ and
  starting a new batch \emph{before releasing the lock}, BPL ensures
  that a batch grows no larger than $m-1$.
\end{proof}

\begin{lemma}\label{lem:sum-of-batches}
  The total number of tasks across all concurrent batches in the Batched
  Priority Lock is bounded by $m-1$.
\end{lemma}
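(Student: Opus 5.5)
The argument should parallel Lemma~\ref{lem:max-batch-size}, but count \emph{waiters} over an interval of time instead of the members of one batch. A task is counted in some batch exactly from its FAA on $curr\_batch$ (Line~\ref{line:inc_curr_batch}) until its TAS on $status$ succeeds (Line~\ref{line:final_TAS}). So the number of tasks across all concurrent batches at time $t$ equals the number of tasks that, at $t$, have executed Line~\ref{line:inc_curr_batch} and have not yet acquired the lock. I would first note this correspondence explicitly. A task that acquires the lock via the fast path never enters any batch, and a task leaves its batch, in the sense of no longer being a contender, as soon as it becomes the holder.

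The key step is a counting argument on cores. By the system model, neither a waiter nor a holder is preemptible, so each of the $m$ cores hosts at most one task that is either waiting for or holding the lock. Hence at any instant
\[
\#\text{holders} + \#\text{waiters in all batches} \le m .
\]
If the lock is held at $t$, this gives at most $m-1$ waiters directly, whichever batches they belong to. If the lock is free at $t$, I must rule out $m$ simultaneous batched waiters. Consider the last release before $t$, or initialization. The releasing holder occupied one core up to its release. Any waiter that performed its FAA while the lock was held is therefore one of at most $m-1$ other tasks. A task arriving after the release while the lock is free either takes the fast path, and so never joins a batch, or finds $num\_waiters > 0$. In the latter case some waiter already exists, but the releasing core may have just produced a new contender. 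That is the delicate case.

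I expect the main obstacle to be this window after the lock is released but before the next waiter's TAS succeeds, when all $m$ cores could in principle be spinning inside batches. To handle it, I would use that a batch member leaves only by acquiring the lock. So among the waiters present at the release, one will become the holder, and the freed core contributes at most one new waiter, which replaces the one that leaves. I would make this precise with an invariant maintained at each lock acquisition: ``waiters $\le m-1$ at every acquisition instant.'' Between acquisitions, the count can grow only through cores not occupied by the current holder. If the invariant turns out to fail in that free window, I would state the bound as holding whenever the lock is held, which is the regime that matters for the waiting-time analysis, and argue that the transient free interval contains no completed critical section and so adds no delay. Finally, I would combine this with the monotonic batch IDs from the Discussion to conclude that the older batches and the newest batch together share the same $m-1$ budget.
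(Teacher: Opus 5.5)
Your held-lock argument is exactly the paper's proof. The paper says only that at most $m$ tasks can be involved with the lock (one per core, since neither waiters nor the holder are preemptible), and concludes $m-1$ waiters across all batches by tacitly counting one of the $m$ as the holder. That held-lock version is also all that Theorem~\ref{th:worst-case} uses: one holder plus at most $m-2$ other waiters.

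The free-window case you flagged is a real imprecision in the lemma, and it goes against the literal statement. Your one-in-one-out invariant therefore cannot be completed. Let the lock be held by a task on core $c$, with a batched waiter on each of the other $m-1$ cores. The holder runs \texttt{unlock}. Before any waiter's TAS in Line~\ref{line:final_TAS} succeeds, a task on core $c$ calls \texttt{lock}, reads $num\_waiters = m-1 \neq 0$ in Line~\ref{line:read_num_waiters}, skips the fast path, and executes Lines~\ref{line:inc_num_waiters}--\ref{line:inc_curr_batch}. Now $m$ tasks hold batch IDs, none holds the lock, and $num\_waiters = m$: the newcomer joins before the departing waiter leaves. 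This is the same arrival described in the proof of Lemma~\ref{lem:max-batch-size}. It lands in a fresh batch, which is why the per-batch bound survives while the across-batch bound does not.

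So commit to your fallback: at most $m-1$ batched tasks whenever the lock is held (in particular immediately after each acquisition), and at most $m$ at all times. Both follow directly from your inequality $\#\text{holders} + \#\text{waiters} \le m$. The induction over acquisition instants and the appeal to monotonic batch IDs add nothing, because the core count already covers all batches at once.
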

\begin{proof}
  Since the maximum number of tasks involved in a single lock cannot exceed
  $m$, irrespective of the number of concurrent batches, the total number of
  waiters in all of them is also bounded by $m-1$.
\end{proof}

\begin{theorem}\label{th:worst-case}
  In the worst-case, a task using a Batched Priority Lock will wait
  for no more than $m-1$ critical sections before entering its
  critical section.
\end{theorem}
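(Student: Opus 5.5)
The plan is to fix a task $\tau$ that issues a lock request at time $t$ and count the distinct critical sections that can complete between $t$ and $\tau$'s own acquisition. Every such critical section belongs to a task that is either (i) the holder $h$ at time $t$, or (ii) a waiter that acquires the lock before $\tau$. The aim is to show that these tasks are pairwise distinct and that none of them belongs to a batch later than $\tau$'s. Then they all lie among the tasks counted in Lemma~\ref{lem:sum-of-batches}, together with $h$.

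First I would invoke the ordering property established in the proof of Theorem~\ref{th:starvation}: a task whose batch ID is larger than $\tau$'s cannot acquire the lock while $\tau$ is still waiting. I would justify this through the barrier structure of Algorithm~\ref{alg:BPL_lock}.
\begin{itemize}
\item The $batch\_barrier$ CAS in $stage\_0$, followed by the wait on $settling[0]$, ensures that only tasks holding the minimum batch ID among the settled waiters pass Line~\ref{line:barrier_0}.
\item The rechecks in Lines~\ref{line:stage1_batch_check} and~\ref{line:final_batch_check} push back any later-batch task that slipped through before the barriers were reset.
\end{itemize}
Consequently, the tasks that enter critical sections before $\tau$ are $h$ (if one exists) together with waiters whose batch ID is at most $\tau$'s. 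Each of these waiters is currently waiting, so it occupies its own core.

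Second, I would count them. Critical sections are non-preemptible and spinning waiters cannot be preempted, so each core hosts at most one task involved with the lock. At time $t$ there are at most $m$ involved tasks including $\tau$, which leaves at most $m-1$ others: $h$ plus the earlier-or-equal-batch waiters. This is exactly Lemma~\ref{lem:sum-of-batches} and Lemma~\ref{lem:max-batch-size} applied with $\tau$ included.

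I also need to rule out any new arrivals that could precede $\tau$. A task arriving after $t$ joins either $\tau$'s batch, which is only possible while $h$ still holds the lock, or a later batch. In the first case it occupies a core that was free at time $t$, so it does not enlarge the set beyond $m-1$. In the second case it is excluded by the first step.

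Moreover, no task can take the lock twice before $\tau$. After releasing, such a task would have to re-request, and Algorithm~\ref{alg:BPL_unlock} has already incremented the batch ID before its release, so the new request gets a strictly later batch. Therefore at most $m-1$ critical sections precede $\tau$'s.

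The main obstacle is making the first step rigorous against pathological interleavings. Two cases need care. One is a fast later-batch waiter that reaches the final TAS while an older-batch waiter is still being pushed back after the barrier reset at \emph{acqd}. The other is the fast path resetting $curr\_batch$. For the latter I would first try the read-order argument given in the text: $curr\_batch$ is read before $num\_waiters$, and waiters increment $num\_waiters$ before they modify $curr\_batch$. This means the fast-path CAS succeeds only when no waiter holds a batch ID. For the former I would argue that a later-batch task passing Line~\ref{line:final_batch_check} requires $batch\_barrier$ to equal its batch. That can happen only through a successful CAS in $stage\_0$ followed by $settling[0]=0$, which forces every older waiter to have compared itself against, and lowered, the barrier.
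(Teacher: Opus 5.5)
Your core count is the paper's own proof. The paper splits what $\tau$ waits for into the current holder, earlier-batch waiters, and higher-priority waiters of its own batch. It bounds the last two groups by $m-2$ because spinning, non-preemptible waiters occupy distinct cores other than the holder's and $\tau$'s, and it takes the batch ordering of Theorem~\ref{th:starvation} for granted.

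The gap is in the step you set out to make rigorous. The ordering does not follow from the barrier code, and your proposed argument for it is false. Having $settling[0]=0$ does not mean every older waiter has compared itself against $batch\_barrier$, because the bitvector only records waiters currently registered in $stage\_0$. After a new holder resets both barriers at \emph{acqd} (Line~\ref{line:reset_barriers}), an older waiter that was in $stage\_1$ or $final\_stage$ must fail a barrier check and follow the goto chain back before it re-executes Line~\ref{line:stage0_begin}. During that window it is invisible. Meanwhile a younger waiter spinning at $read\_batch\_barrier$, with its bit already cleared, can CAS its own batch into the reset barrier, observe $settling[0]=0$, and pass Line~\ref{line:barrier_0}.

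What actually protects the order is that the new holder keeps the lock for a whole critical section. In that time the older waiter is expected to re-register and lower $batch\_barrier$, after which the rechecks at Lines~\ref{line:stage1_batch_check} and~\ref{line:final_batch_check} expel the younger one. If the older waiter is delayed past the release, the younger one can win the TAS, and the $m-1$ count no longer holds. The paper states exactly this caveat in the \emph{Blocking Delay and Schedulability} subsection (homogeneous cores, or small speed disparity). So it must enter your proof as a hypothesis, not as a consequence of the barriers.

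Separately, anchor $t$ at $\tau$'s FAA on $curr\_batch$ (Line~\ref{line:inc_curr_batch}), not at the moment the request is issued. Suppose $h$ executes its unlock after $\tau$ issues its request but before that FAA. Then $\tau$ lands in the next batch, and $h$ can re-request into that same batch with a higher priority than $\tau$. In that case both ``arrivals join $\tau$'s batch only while $h$ holds the lock'' and ``no task takes the lock twice before $\tau$'' fail. Counting from the request time then yields up to $m$ critical sections, with $h$'s appearing twice.

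The fix is to let $h$ be the task whose unlock closes $\tau$'s batch, and to count at the instant just before that unlock. Every task that enters a critical section between $h$'s release and $\tau$'s acquisition is then spinning on its own core with batch at most $\tau$'s. Your distinctness argument then gives $1+(m-2)$.
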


\begin{proof}
In the worst-case, a task $\tau$ attempting to acquire a lock will have to
wait for (1) the current holder, (2) tasks in all earlier batches, and (3)
all higher priority tasks in its own batch. 
In Case 1, no waiter can be attempting to acquire the lock on the same core as
the current holder, since critical sections are non-preemptible and locks are
non-blocking. 
Cases 2 and 3 represent all other waiters excluding $\tau$.
Since waiters busy-wait and do not block, they must each be on a distinct core.
Hence the total number of tasks in Cases 2 and 3 cannot be greater than $(m-2)$.
Therefore the maximum number of critical
sections that $\tau$ will have to wait to acquire the lock is $1$, from Case 1,
and $(m-2)$, from Cases 2 and 3. Thus, the total cannot be greater than $(m-1)$.
\end{proof}

\subsection{Blocking Delay and Schedulability}

BPL guarantees worst-case FIFO-bounded delay as long as a task in a
later batch is not able to gain the lock before a task in an earlier
batch. This is assured for homogeneous cores, which are assumed to
execute stages of the acquisition function at the same rate, or for
cores where the disparity in speeds is relatively low. Violations to
the FIFO bound would only be possible if a task in a later batch on
one core could pass through all stages of the lock acquisition
algorithm before any other task in an earlier batch completed Line~\ref{line:stage0_begin}
of Algorithm~\ref{alg:BPL_lock}. 

While the worst-case blocking bound remains identical for FIFO locks
and BPL, implying equivalent schedulability analysis in theory, BPL provides
superior average-case performance for higher priority tasks within each batch.
Under FIFO locks, all tasks experience approximately the same average waiting
time, irrespective of priority. In contrast, BPL reduces the average waiting
time for higher-priority tasks while increasing it for lower-priority tasks.
The cost of implementing BPL is slightly higher than that of a FIFO-ordered
lock (as discussed in Section~\ref{sect:uncontested_acquisition}). However, if
the reduced delay to higher priority tasks outweighs the implementation costs,
then BPL should be seen as preferred to a FIFO lock.

\subsection{Implementation} 
Variants of the instructions used by BPL 
exist for most popular architectures making the lock approach portable. For
example, the x86 architecture features CMPXCHG, XADD, ADD, INC, DEC, BTS, XCHG,
MOV,
AND, OR, and BTR instructions, some of which may have a LOCK prefix to enforce
atomicity. On ARM, the LDREX/STREX instructions along with explicit
memory barriers (DMB) can be used to implement BPL.

Since shared memory is modified using the CAS instruction in a loop in each
stage, in the worst-case it takes a waiter $O(m)$ time to settle in an
appropriate stage in the \verb|lock()| function. The
\verb|unlock()| function does not contain any loops and hence is executed in
constant time. Apart from the lock object, which is globally accessible and
hence must be fetched from memory, all other variables are local and stored in
registers to minimize memory access overhead. 

The ticket lock shown in Algorithm~\ref{alg:ticket_lock} is a simple
FIFO-ordered lock. It uses two counters: one incremented at request
time, and the other at release time. This algorithm enforces each task
to be ordered by a unique request value, or a batch number, which is
acquired when attempting to access the lock.

\begin{algorithm}[!htbp]
  \caption{Ticket Lock}\label{alg:ticket_lock}
  \SetKwFunction{FLock}{lock}
  \SetKwFunction{FUnlock}{unlock}
  \SetKwProg{Fn}{function}{}{}
  \Fn{\FLock{struct fl * L}}{

    batch = FAA (L$\rightarrow$request, 1);\label{line:fl_request}\\

    check\_response: \If {L$\rightarrow$release != batch} {
      goto check\_response;
    }
  }
  \Fn{\FUnlock{struct fl * L}}{      
    INC (L$\rightarrow$release);\label{line:fl_release}    
  }
\end{algorithm}

BPL similarly uses a FIFO ordering among batches in $stage\_0$ of
Algorithm~\ref{alg:BPL_lock}, where more than one task has the same batch number. The
ticket lock avoids pathological interleaving of waiters and holders by using two
counters: one incremented by the waiters and the other by the holder when
releasing the lock. These counters are equivalent to $num\_waiters$ and
$curr\_batch$ in BPL. 

Assuming a 32-bit architecture, the batch ID is denoted by the upper
$32 - k$ bits of $curr\_batch$. The largest value that can be
represented by an unsigned integer variable of size $32 - k$ is $2^{32
- k} - 1$. If the batch ID is incremented beyond this value, it will
lead to the variable looping back to 0, which may lead to an
arbitrarily long wait time for tasks that hold a higher batch ID. To
avoid this issue, Line~\ref{line:reset_curr_batch} resets the current
batch ID to 0, after ensuring there are no other waiters and the
current value is the same as the value read in
Line~\ref{line:fast_path_begin}. Thus, BPL requires there to be a
brief period of no contention for every $2^{32 - k} - 1$ lock
acquisition attempts. For example, if there are 64 cores in the
system, and $k$ is set to $6$, BPL requires there to be a small period
of no contention for every $2^{26}-1$ (more than 67 million) lock
requests. At this point, all current waiters should complete their
critical sections, before new lock requests are made. In a 64-bit
system, this requirement is even less restrictive, allowing more than
$2.8{\times}10^{17}$ requests on a 64-core machine before requiring a
no-contention batch ID reset. Thus, we do not consider it to be a
practical limitation of the implementation.

If multiple tasks in the earliest batch have the same highest
priority, all of them will reach the final stage
(Line~\ref{line:final_P_check}). The TAS lock in
Line~\ref{line:final_TAS} ensures that only one of them succeeds in
acquiring the lock. While this choice is made arbitrarily, the
worst-case lock delay bound among tasks within the batch remains the
same as derived in the previous subsection.

Cancellation of a request is not typically required in the case of
kernel locks. However, BPL is able to support lock cancellations with minimal
changes. Any task that intends to terminate its request simply resets
$batch\_barrier$ and $priority\_barrier$ before exiting the \verb|lock()|
function, triggering a reordering of the remaining waiters.

\section{Experimental Evaluation}\label{sect:exp_eval}
\subsection{Simulation}\label{sect:sim-BPL}

We use a queuing theory approach to model task requests for a batched priority
lock. As will be seen in the evaluations, BPL is compared against FIFO and
priority-based locks under simulated conditions.

We begin with the machine repairman queuing theory model~\cite{queuing_theory}, shown in
Figure~\ref{fig:mm1kk}. The model consists of $m$ sources of traffic, which
require service from a single server. The server handles only one request at a
time. Since a request represents a machine requiring repair, a source cannot
generate another request until its previous one has been serviced.
In our case, due to the non-preemptible nature of waiters and lock holders, the
sources are tasks mapped to $m$ cores, and the server is a
shared resource (e.g., kernel) guarded by a lock. Most importantly, similar to
the sources in the model, tasks running on separate cores cannot request the
lock to the shared resource until their previous attempt has completed.

\begin{figure}[!htb]
  \centering
    \includegraphics[width=0.38\linewidth]{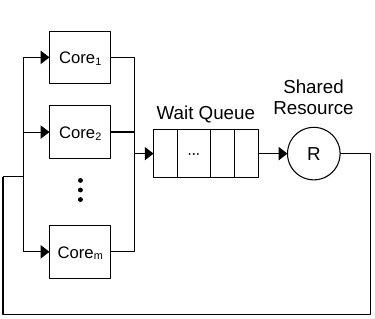}
	\caption{Machine Repairman Queue Model with $m$ Sources}\label{fig:mm1kk}
  \Description{Machine Repairman Queue Model with m Sources}
\end{figure}

One approach to modeling request arrivals from each of the $m$ sources is to
consider them as separate Poisson arrival processes. Each source, $C_i$, has an
arrival rate, $\lambda_i$, with an expected exponentially distributed
inter-arrival time, $\frac{1}{\lambda_i}$. For $m$ separate Poisson arrivals,
one per core, the model has an aggregated arrival rate of $\lambda_{agg} =
\sum\limits_{i = 1}^{m}{\lambda_i}$. If the shared resource is managed by a
single server also having a Poisson service rate, $\mu$, then its expected
service time is $\frac{1}{\mu}$.

Rather than have each core generate independent arrivals, we use a traffic
generator that produces concurrent bursts of arrivals. A burst of size $m$ implies $m$ cores coordinating to
generate one lock request each at the same time instant. The generator has an average rate
$\lambda_{burst}$. Therefore, the inter-arrival time between consecutive
executions of the burst generator is exponentially distributed with a mean value
of $\frac{1}{\lambda_{burst}}$. Each time the generator runs, it produces a
burst of arrivals from a uniform distribution having a specific mean. As a
uniform distribution over a range $[0,m]$ has an expected value of $m/2$, it
never generates bursts with a mean value greater than this.
For compliance with our queuing model, the burst requests are randomly selected
from sources that do not already have a request pending. If the number of
available sources is less than the generated burst size, all of them are
selected. If there are no available sources, the burst generator waits for
another interval of time.
Thus, the number of sources picked might be less than or equal to the random
burst size.
The generator then triggers the chosen sources to each submit a lock request.
This queuing model is simulated using the SimPy discrete event
simulator~\cite{simpy}. The different configurations that are evaluated are as
follows:
\begin{itemize}
  \item {\em Number of Cores (Sources)}: 8, 16, 32, and 64
  sources all having distinct priorities.
  \item {\em Mean Burst Size}: chosen in powers of 2, from $2^2$ to $m/2$.
  \item {\em Service Rate ($\mu$)}: this is kept at 0.01, to represent an average
  of 1 request serviced every 100 time units.
  \item {\em Burst Arrival Rate ($\lambda_{burst}$)}: For each core count and mean burst
  size, $\lambda_{burst}$ is varied from $0.01\mu$ to $1.0\mu$.
  \item Lock Ordering: FIFO (FL) and strict priority ordering (PL) are simulated
  using the shared resource primitives available in SimPy. A custom class was
  added to model Batched Priority Lock ordering (BPL). Since our only goal here
  is to understand the impact of different locking methods, we do not include
  the fast path, which is an optimization provided for practical
  performance.
\end{itemize}

We use the \emph{weighted mean delay} to evaluate the efficacy of each lock
ordering. If the mean delay of $Core_i$ is denoted by $d_i$ and its weight is
denoted by $w_i$, the weighted mean delay, $d_w$ is obtained using the following
equation:
\begin{equation}
  d_w = \frac{\sum\limits_{i = 1}^{m}{w_i \cdot d_i}}{\sum\limits_{i = 1}^{m}{w_i}}
\end{equation}

In our tests, the lowest priority core is assigned a weight of 1, and each
higher priority core is assigned a weight one higher than the previous core.
Thus, the highest priority core has the weight $m$. Each test is run until a
total of $m{\times}10000$ lock requests are complete.
The results obtained are consistent across different core counts. Hence, we show only the results of the 64-core test case with mean
burst sizes 8 and 32. The total priority inversions, and weighted mean delay
(normalized to that obtained for FIFO ordering), are shown in
Figure~\ref{fig:sim_64cores}.

\begin{figure*}
  \centering
  \includegraphics[width=0.3\textwidth]{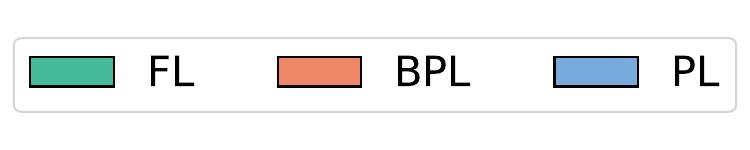}
  

  \begin{subfigure}[b]{0.49\linewidth}
    \centering
    \includegraphics[width=\linewidth]{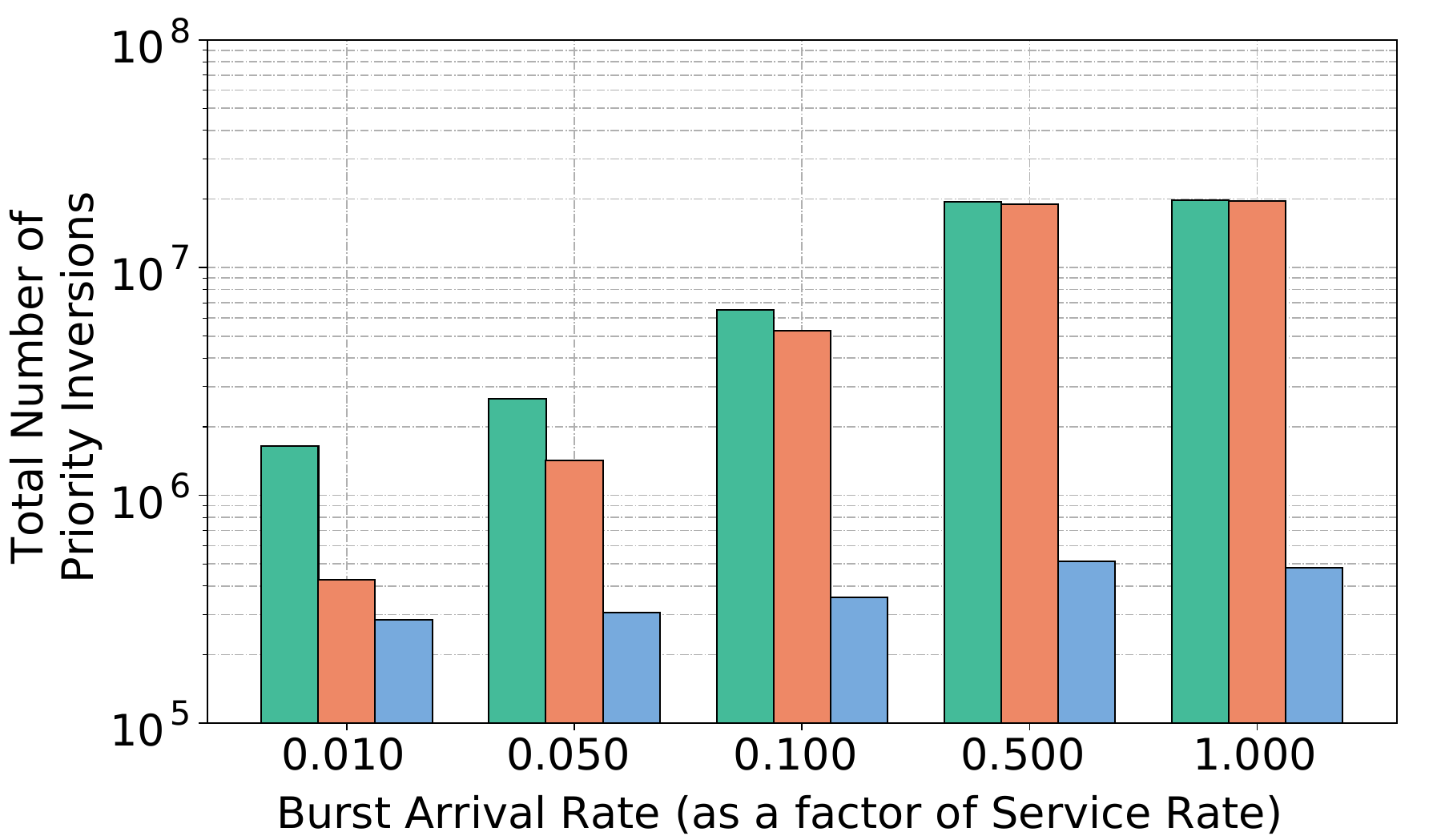}
    \caption{Priority Inversions (mean burst size = 8)}
    \label{fig:mbs8_pinv}
  \end{subfigure}
  \hfill
  \begin{subfigure}[b]{0.49\linewidth}
    \centering
    \includegraphics[width=\linewidth]{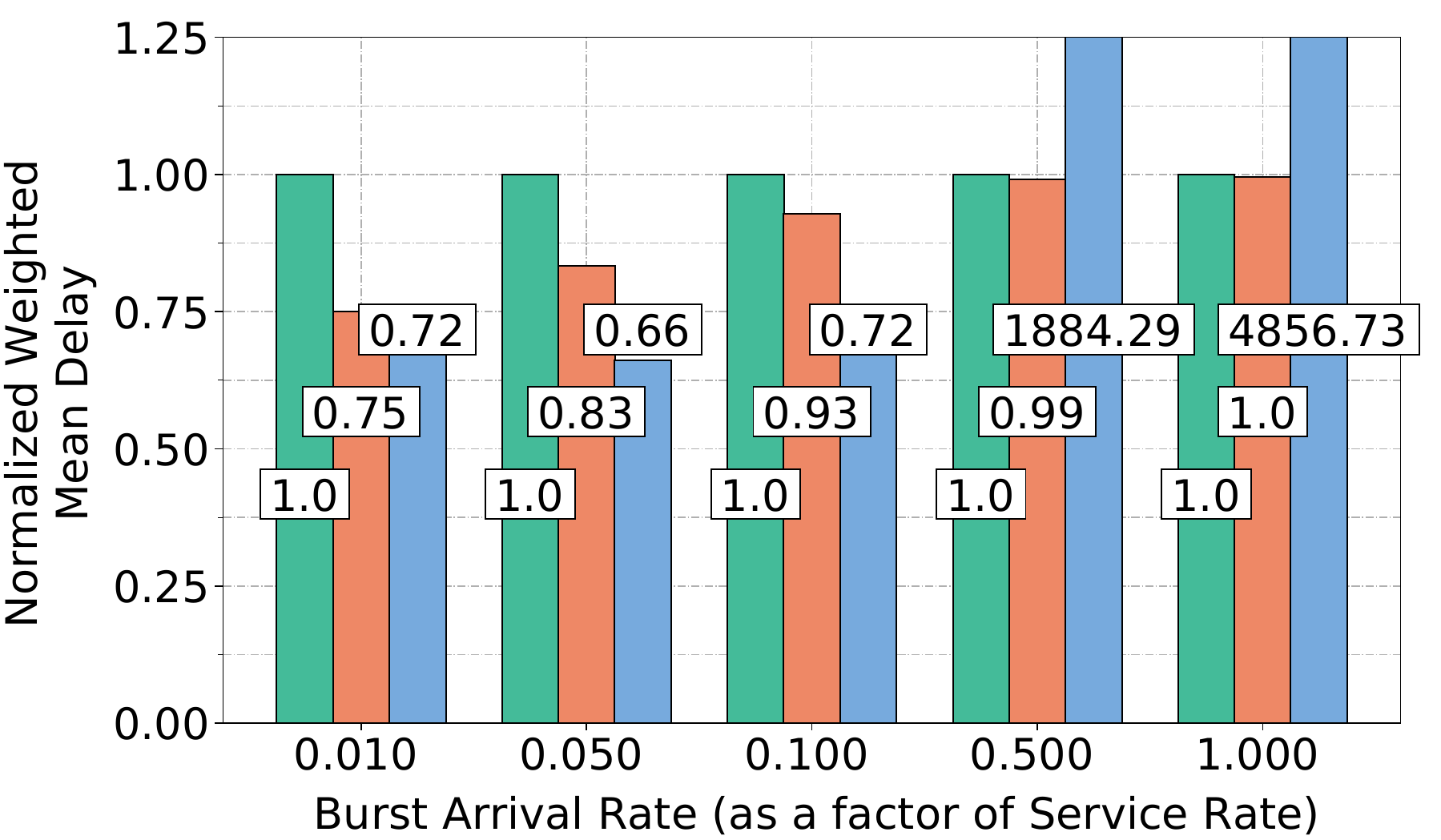}
    \caption{Weighted Mean Delay (mean burst size = 8)}
    \label{fig:mbs8_wavg}
  \end{subfigure}

  \vspace{0.3cm}
  
  \begin{subfigure}[b]{0.49\linewidth}
    \centering
    \includegraphics[width=\linewidth]{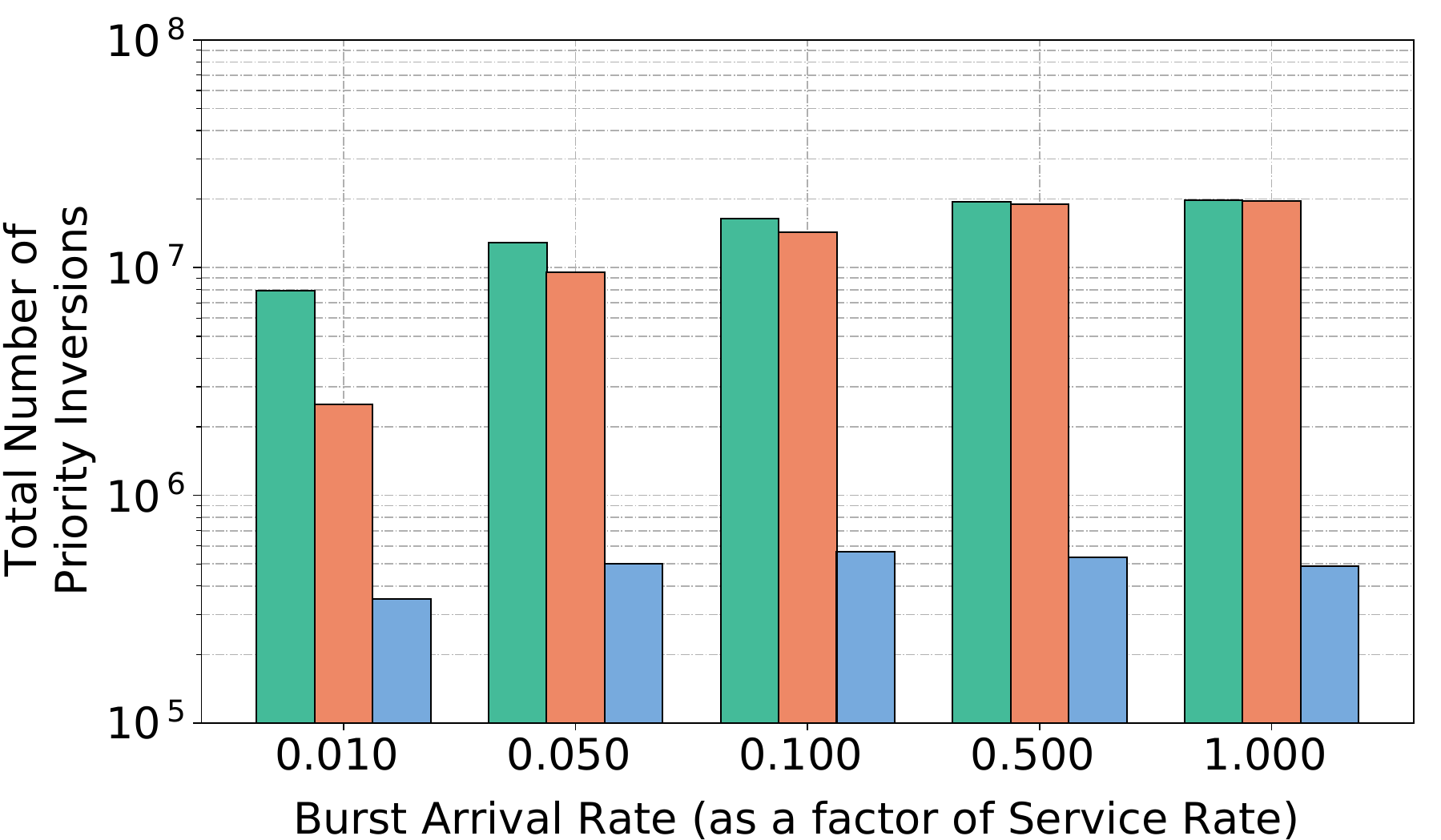}
    \caption{Priority Inversions (mean burst size = 32)}
    \label{fig:mbs32_pinv}
  \end{subfigure}
  \hfill
  \begin{subfigure}[b]{0.49\linewidth}
    \centering
    \includegraphics[width=\linewidth]{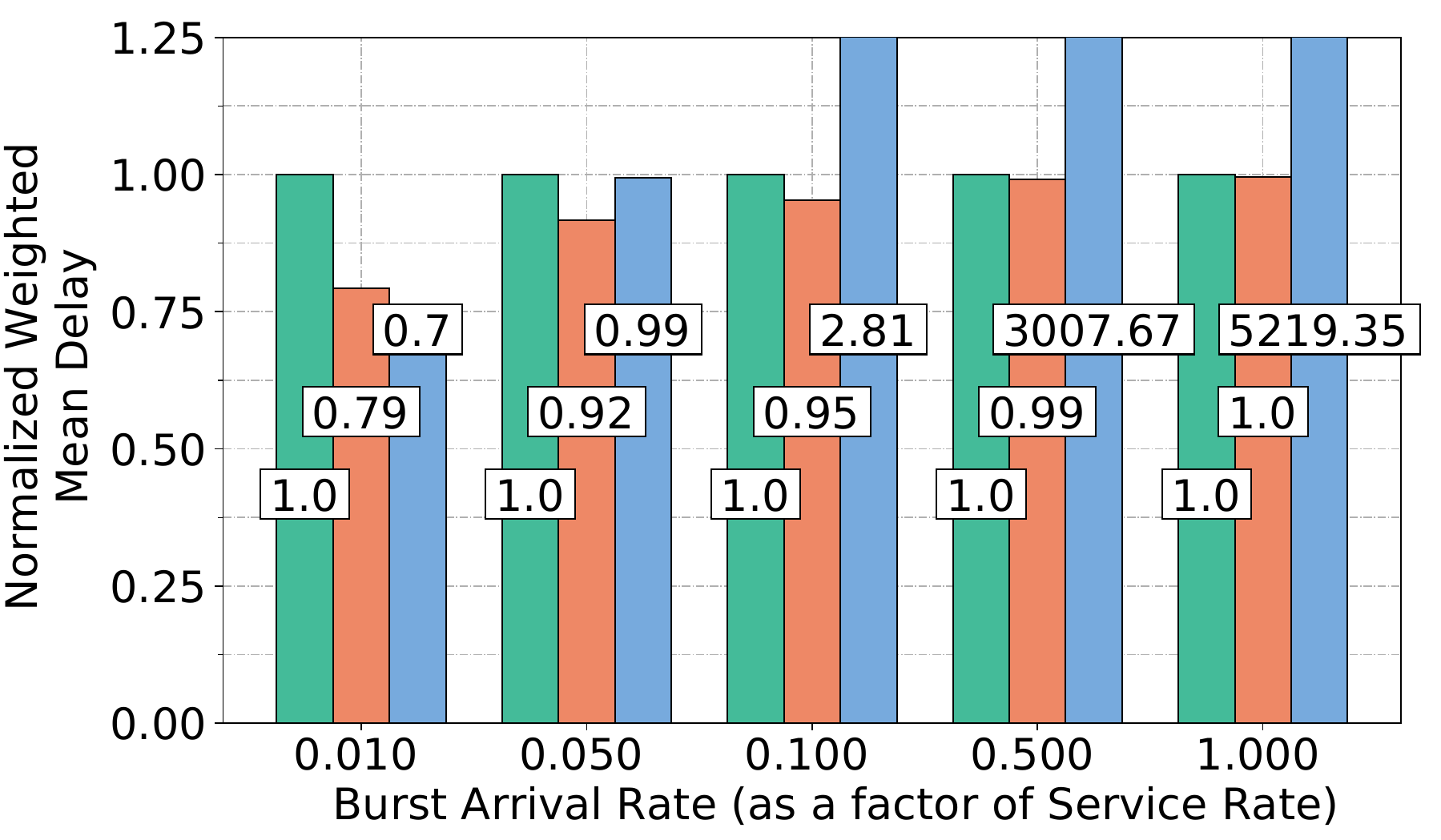}
    \caption{Weighted Mean Delay (mean burst size = 32)}
    \label{fig:mbs32_wavg}
  \end{subfigure}
  
  \caption{Comparison of Priority Inversions and Weighted Mean Delay for 64 Cores}\label{fig:sim_64cores}
  \Description{Comparison of Priority Inversions and Weighted Mean Delay for 64 Cores}
\end{figure*}

{\bf Observation 1:} For lower mean burst sizes and arrivals, the
performance of BPL is close to that of PL.

In Figures~\ref{fig:mbs8_pinv} and~\ref{fig:mbs32_pinv}, for lower mean burst sizes and arrival rates, the
percentage of lock requests experiencing priority inversions under Batched
Priority Locking (BPL) is significantly lower than for FIFO ordering (FL),
although never as good as for strict priority ordering (PL). 
This is because, when the arrival rate is low, the inter-arrival time is
long enough for most if not all requests generated in each burst to be
serviced. Hence, the behavior of BPL is very close to that of PL: all
requests in a burst are categorized into a batch, and they are allowed access
to the resource in priority order.

{\bf Observation 2:} For higher mean burst sizes and arrivals, the
performance of BPL is never worse than that of FL.

As the burst arrival rate increases, we see a divergence in the performance of
PL and BPL, and a convergence between FL and BPL. Higher arrival rates yield new
bursts while requests from one or more previous bursts are still waiting to be
serviced. This potentially splits waiters into increased numbers of separate
batches, if arrivals span the execution of different critical sections, so the
benefits of priority ordering within a given batch are diminished. For very high
arrival rates, each batch size approaches 1, and there can be up to $m-1$ such
batches, so BPL degrades to FL. 
In contrast, at lower burst arrival rates and
sizes, a group of tasks may join the same batch, and BPL performs closer to PL
in terms of minimizing priority inversions. 
Note that lower arrival rates do not necessarily imply low 
contention, as contention also depends on the burst size. Thus, for a mean 
burst size of 32, even when the arrival rate is only 0.010 times the service
rate, an average of 32 tasks contend for the lock at each arrival.

Figures~\ref{fig:mbs8_wavg} and~\ref{fig:mbs32_wavg} provide insights into the impact of priority
inversions on task waiting times. In cases where BPL avoids most priority
inversions and performs similarly to PL, the weighted mean delays of the two
orderings are comparable. The difference between FL, and the pair of prioritized
locks in these cases is proportional to the difference in the priority
inversions between them. As the burst arrival rate increases, the normalized
weighted mean delay of BPL approaches that of FL. However, PL now suffers badly,
because it starves lower priority tasks from making progress altogether. The
graphs are capped at a normalized $d_w$ value of 1.25, but PL reaches values
above 5,000 in some cases. This shows the limitations of a strict priority
ordering, which attempts to avoid priority inversions in all cases at the cost
of potentially unbounded delay for lower priority tasks.

These results show that Batched Priority Locking avoids priority inversions as
long as progress guarantees are not impacted. BPL generally outperforms FL in
terms of minimizing priority inversions and reducing waiting times for higher
priority tasks. When there are large bursts of tasks arriving
infrequently, they get grouped into the same batch, with BPL performance being
similar to that of PL. On the other hand, if batch sizes reduce to 1 task, even under high arrival rates, BPL avoids the
unbounded delays of a strictly priority-ordered lock, limiting the
worst-case delay to that of FL.

\subsection{Evaluation in an RTOS}\label{sect:rtos_eval}

In this subsection, we implement and evaluate BPL using Quest RTOS running on a
Cincoze DX1100 Embedded PC~\cite{DX}. This platform has a 1.8 GHz
Intel Core i7-9700TE CPU with 8 physical cores. BPL is compared to an
unordered spinlock (SL) implemented using the TAS lock and FIFO-ordered ticket
lock (FL). We omit comparisons with strictly priority-ordered lock as it has
already been shown to be unsuitable due to its potential to starve lower priority tasks.
We omit comparisons against MCS, and other
blocking locks that use queues, because BPL is designed for systems
that use global spin-based locks.

\subsubsection{Cost of Uncontested Lock Acquisition}\label{sect:uncontested_acquisition}


The overhead of each lock is evaluated by measuring the time taken to
acquire and release it using the \verb|rdtsc| instruction when there
are no contenders. 
All measurements are performed with interrupts
disabled, so as not to skew results and are averaged over 10,000
readings. After
excluding the 45 cycle overhead of the \verb|rdtsc| instruction
itself, the observed minimum, median, 99.9th percentile and maximum
overhead in clock cycles are shown in Figure~\ref{fig:uncontested_acquisition}.

\begin{figure}[ht]   
  \centering
    \includegraphics[width=0.5\linewidth]{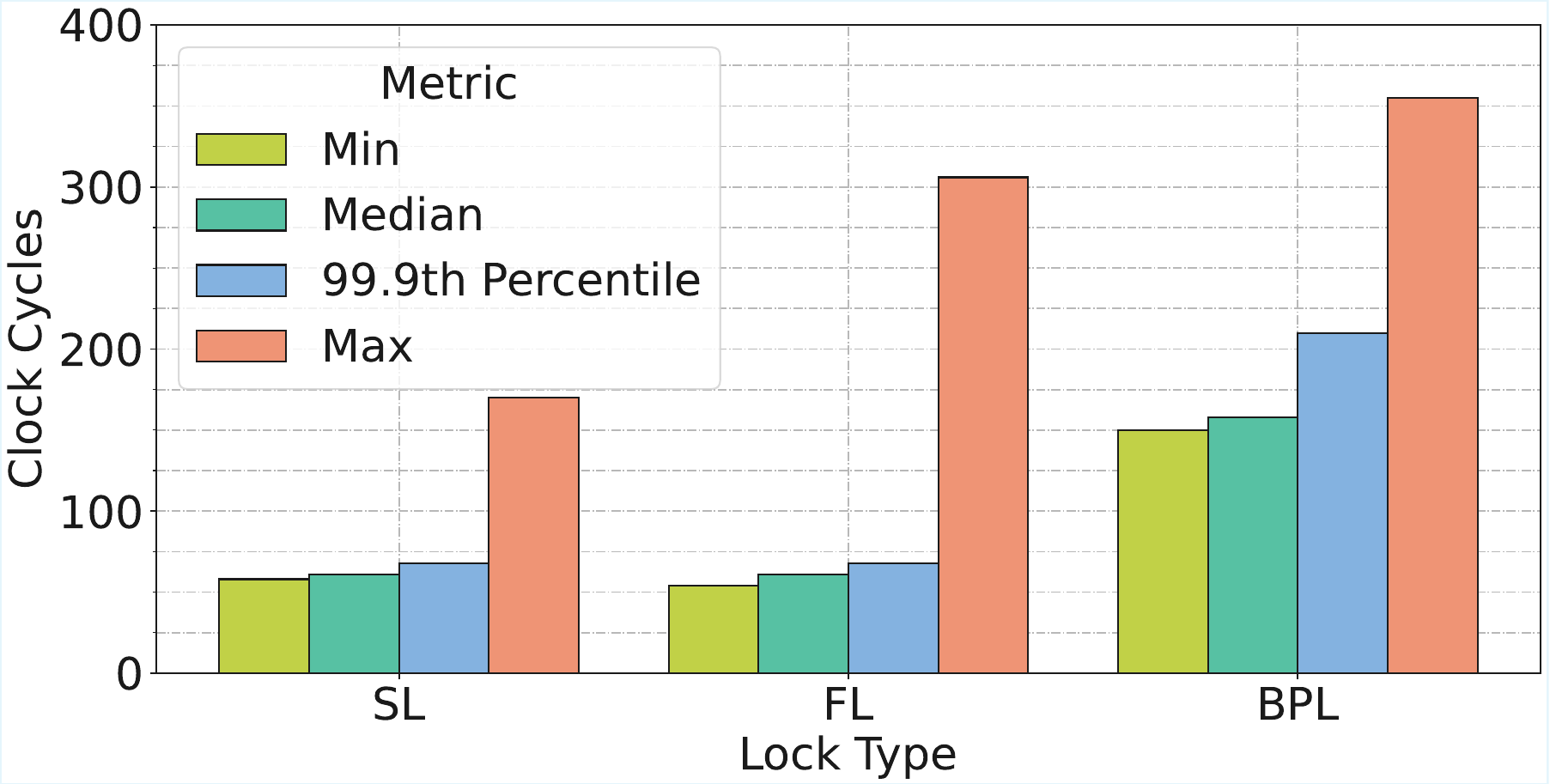} 
  \caption{Cost of Uncontested Lock Acquisition}\label{fig:uncontested_acquisition}
  \Description{Cost of Uncontested Lock Acquisition}
\end{figure}

For both in the minimum and median cases, BPL incurs $\approx100$ clock cycle
increase in overhead over both the simple spinlock and the ticket
lock. Even in the 99.9th percentile case, the increase in overhead
from the unordered spinlock to BPL is less than 150 cycles. This low
overhead shows the effectiveness of the fast path.  The maximum time
recorded for all the locks occurs at the very first instance of
locking and unlocking. This spike in worst-case execution time is due
to caching misses and would be avoided when lock operations
are performed on a warm cache. Observe from Figure~\ref{fig:uncontested_acquisition},
the maximum cost of BPL is only 50 cycles over FL.
The overhead of BPL is
offset by the gains it achieves in terms of providing bounded delay to
all waiters and reduced priority inversions compared to a FIFO
lock. BPL's cost is further mitigated in cases where it applies to a
big kernel lock or relatively large critical section. 

\subsubsection{Working Lock Performance in an RTOS}

\begin{figure*}
  \centering
  \includegraphics[width=0.3\textwidth]{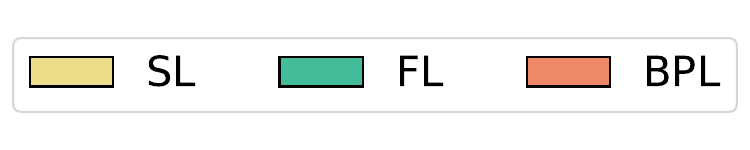}
  
  \vspace{0.3cm}
  
  \begin{subfigure}[b]{0.49\textwidth}
    \centering
    \includegraphics[width=\linewidth]{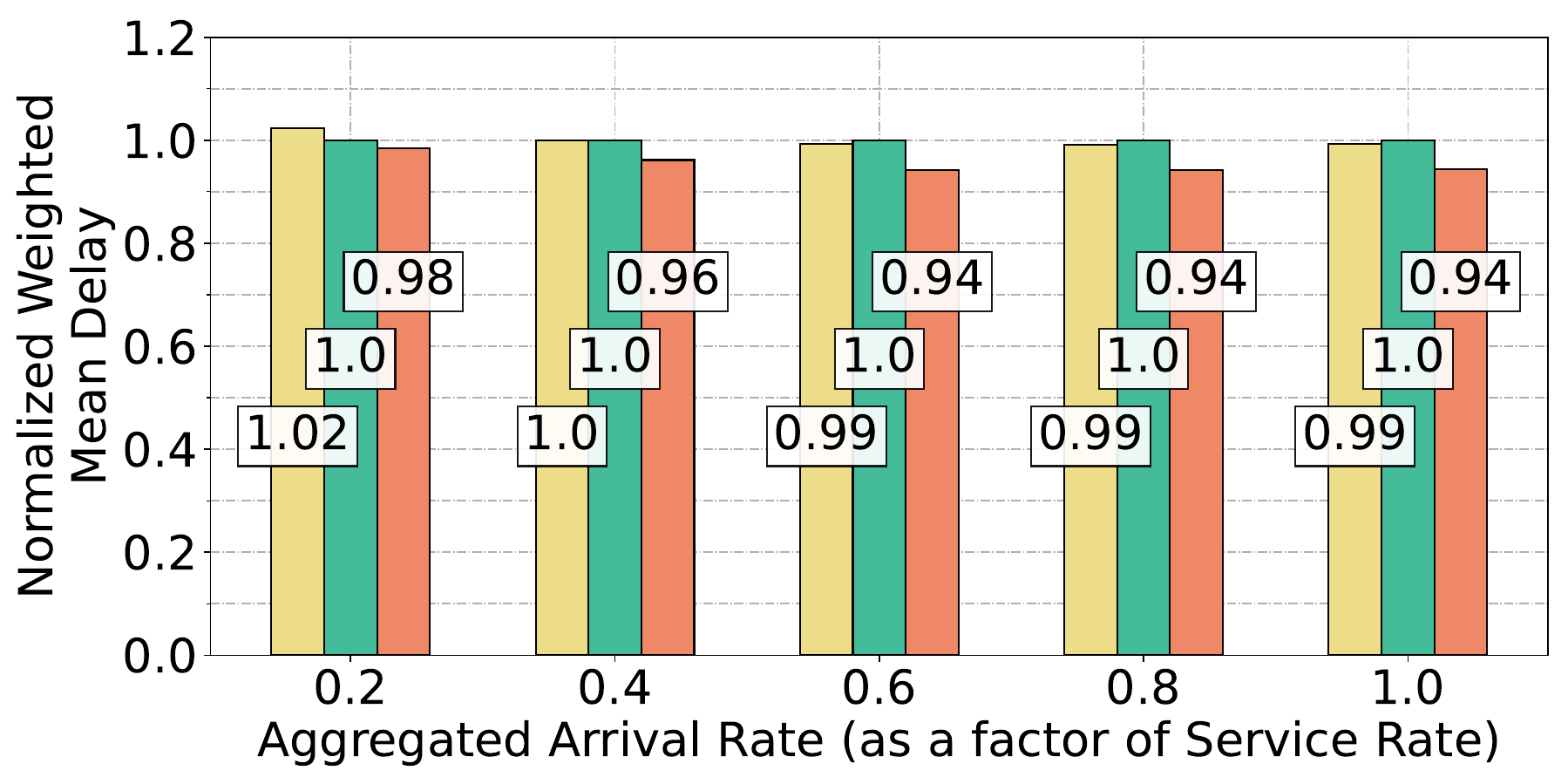}
    \caption{Weighted Mean Delay\newline \phantom{aaa\,}(equal arrival rates)}
    \label{fig:dx_wavg_eq_underload}
  \end{subfigure}
  \hfill
  \begin{subfigure}[b]{0.49\textwidth}
    \centering
    \includegraphics[width=\linewidth]{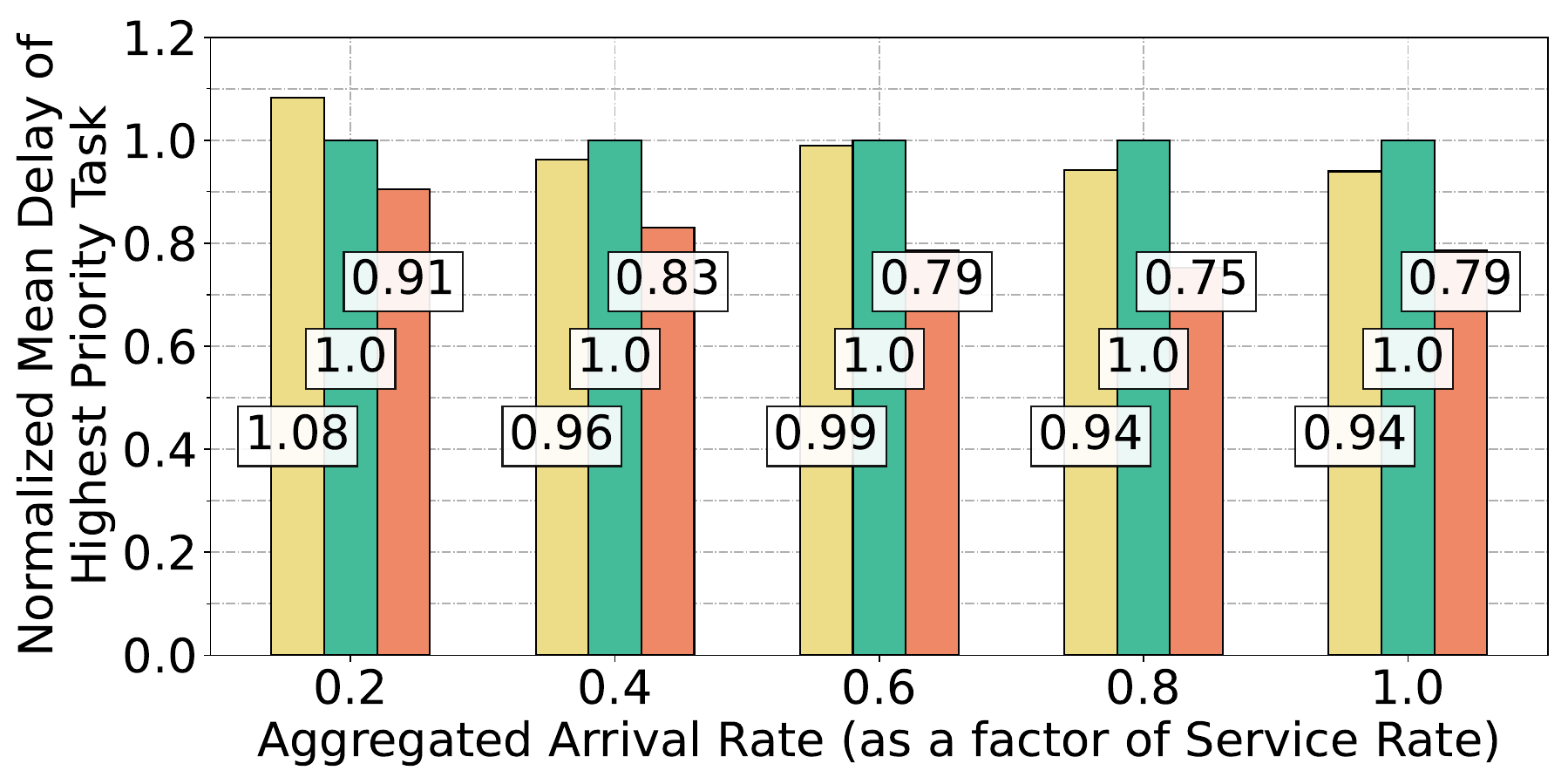}
    \caption{Delay of Highest Priority Task\newline \phantom{aaa\,}(equal arrival rates)}
    \label{fig:dx_hpavg_eq_underload}
  \end{subfigure}
  
  \vspace{0.3cm}
  
  \begin{subfigure}[b]{0.49\textwidth}
    \centering
    \includegraphics[width=\linewidth]{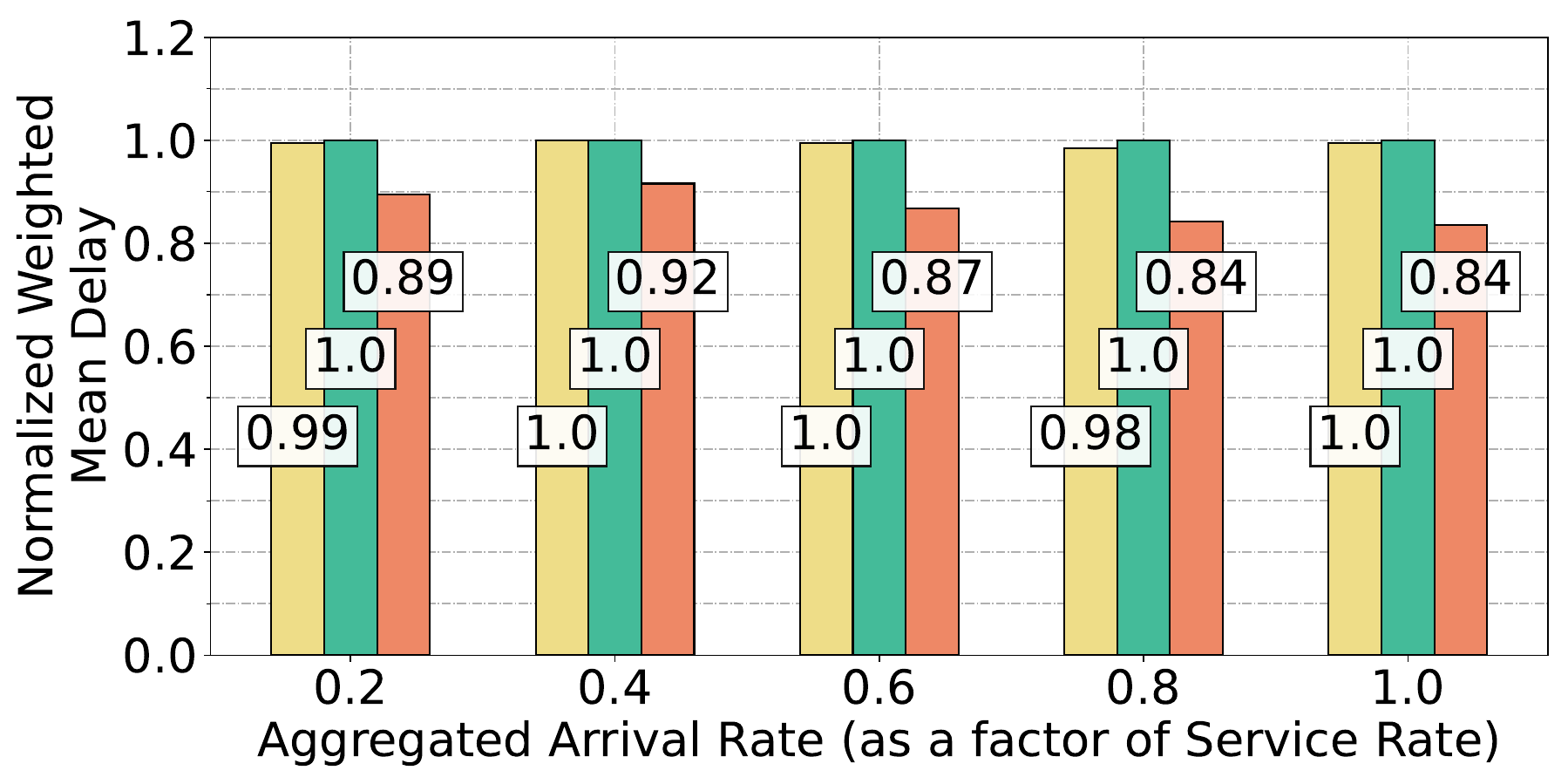}
    \caption{Weighted Mean Delay\newline \phantom{aaa\,}(higher priority tasks have lower arrival rates)}
    \label{fig:dx_wavg_hpless_underload}
  \end{subfigure}
  \hfill
  \begin{subfigure}[b]{0.49\textwidth}
    \centering
    \includegraphics[width=\linewidth]{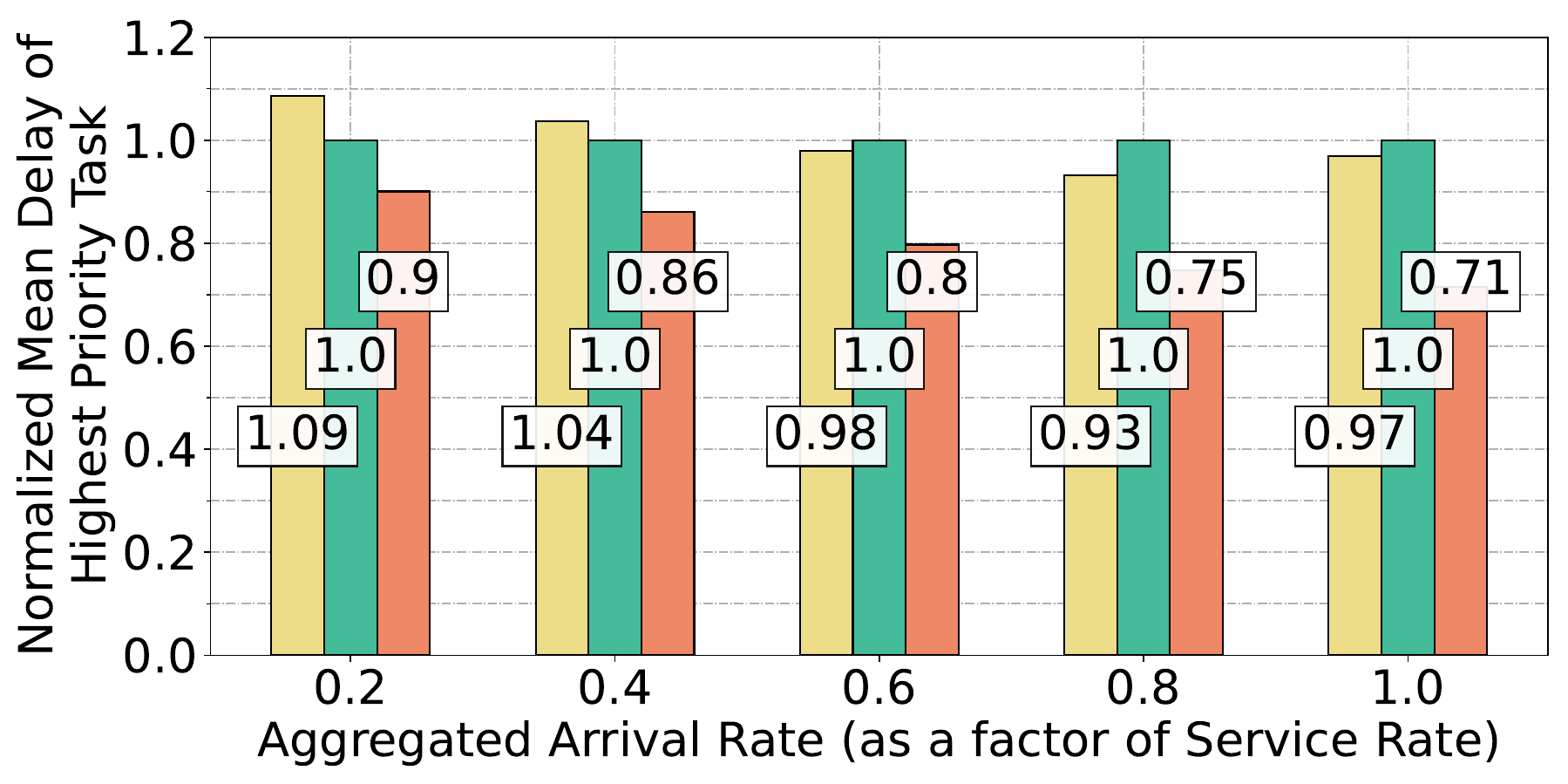}
    \caption{Delay of Highest Priority Task\newline \phantom{aaa\,}(higher priority tasks have lower arrival rates)}
    \label{fig:dx_hpavg_hpless_underload}
  \end{subfigure}
  
  \caption{Comparison of Lock Performance for 8 Cores on an RTOS}\label{fig:dx_underload}
  \Description{Comparison of Lock Performance for 8 Cores on an RTOS}
\end{figure*}

Next, all three lock implementations in Quest are compared with
independent traffic
arrival patterns having an aggregated arrival rate, $\lambda_{agg}$
described in Section~\ref{sect:sim-BPL}. We use independent rather
than a bursty arrival process, as this is more generic, and also because it is more difficult to ensure
separate tasks on different cores arrive at exactly the same time to
contend for a shared resource (here, the kernel). $\lambda_{agg}$ is
varied from $0.2\mu$ to $1.0\mu$ for two cases: (1) where tasks on all cores
have equal arrival rates, irrespective of priority, and (2) where
higher priority tasks have lower arrival rates. All tasks have
distinct priorities across the 8 cores of the machine.

Note that
while we use only $m$ tasks and vary their arrival rate, this is equivalent to
testing with multiple tasks assigned to each core because we only consider
non-preemptible waiters and lock holders.
In case (1),
the individual task arrival rates, $\lambda_i$, are set to
$\frac{\lambda_{agg}}{8}$, and in case (2), each task's arrival rate,
$\lambda_i=\frac{r_i{\cdot}\lambda_{agg}}{36}$, where $r_i\in[1,m]$,
with $r_i=1$ for the highest priority task, down to $r_i=8$ for the
lowest priority task.
In all cases, the length of the critical section is set to
70 $\mu s$ (i.e., the service rate is 14 kHz) which is
the median time to service a {\tt
usb\_write} system call guarded by a kernel lock in Quest. Each task is
assigned to a VCPU, with a utilization of 99\%, i.e., the VCPU receives
an execution budget of 99 time units for every period of 100 time units. A total of
80,000 system call requests are made across all tasks, each attempting
to acquire the kernel lock.

Figures~\ref{fig:dx_wavg_eq_underload} and~\ref{fig:dx_wavg_hpless_underload}
show the weighted mean delays, normalized against FIFO locking, for
cases (1) and (2), respectively. While BPL generally achieves
better performance as the aggregated arrival rate is
increased in both cases (1) and (2), it performs particularly well in the latter. Its normalized weighted mean delay is reduced up to 16\% in Figure~\ref{fig:dx_wavg_hpless_underload}, because BPL is able
to give precedence to higher priority arrivals in the same batch. As
higher priority arrivals are less frequent, they are more likely to be
differentiated from lower priority tasks when they do arrive, using
BPL. Thus, while BPL generally outperforms FL, it is particularly effective in
cases where a shared resource is accessed less frequently by higher priority
tasks. 
The normalized delays of the highest priority task shown in
Figures~\ref{fig:dx_hpavg_eq_underload} and~\ref{fig:dx_hpavg_hpless_underload}
more clearly demonstrate the benefits of BPL on a real-time system. 
It must be noted that, BPL does 
not have a detrimental impact on the overall delay of all tasks. 
These results confirm the properties of BPL in a practical
system. Note that SL rarely does worse than FL but if we were to go to
higher contention rates it would not guarantee bounded delay, similar
to PL.

\section{Related Work}\label{sect:rel_work}
There is a wealth of literature addressing the multiprocessor
synchronization problem in real-time systems, including the seminal
works by Rajkumar et al. on priority ceiling protocols in the late
eighties~\cite{dpcp} and early nineties~\cite{mpcp}. A thorough
summary of the literature is provided by Brandenburg~\cite{bburg_lockreview}. 
Several of the recent real-time synchronization
approaches either focus on supporting suspension~\cite{omlp}, or priority
assignment to spinlocks~\cite{ospa}. However, these approaches are not directly
applicable to our focus on big kernel locks, because scheduling decisions are
made only after acquiring exclusive access to the kernel. Here, we summarize a
few relevant locks. 

The MCS lock~\cite{mcs_lock} follows a simple FIFO-ordered enqueue
process, where tasks spin on core-local memory to avoid cache-related
delays. The simplicity of MCS locks has resulted in them being
implemented in Linux~\cite{mcs_linux}. Other variants of FIFO locks
include those that allow attempts, timeouts or aborts, and are often
referred to as
trylocks~\cite{preempt_ticket,abort_mutex,fifo_trylock_4slot,try_mcs}.

Markatos~\cite{markatos_plock} proposed a simple priority-ordered lock, where
waiters append themselves to a queue, which is traversed by the releasing task
to determine the next holder. This attributes more complexity to the lock
releasing task, effectively extending its critical section overhead.  The CLH
lock, independently developed by Craig~\cite{clh_c} and Magnusson et
al.~\cite{clh_lh} follows a similar approach to implement FIFO or
priority-ordering while providing cache benefits by allowing spinning on
core-local memory. Johnson and Harathi presented a prioritized spinlock, where a
task attempting to acquire the lock traverses the queue to insert itself
according to priority. This approach yields a O(1) release
time~\cite{o1_spinlock}.

In the aforementioned works, there is an increase in complexity to
maintain scalability. Huang and Jayanti presented a formal definition
for priority mutual exclusion~\cite{priority_mutual_exclusion}, and
highlighted some of the issues with previous work. Their algorithm
uses as many wait queues as there are priority levels in the system,
in an approach similar to multi-level queue scheduling.  Tasks of
equal priority reside in the same wait queue, and the lock releasing
task iterates through the heads of each wait queue in priority order
to find the next holder. Even though this method is somewhat simpler
than previous approaches, it is not applicable to systems with a
variable number of priority levels (e.g., when priorities are based on
task deadlines).
Fuerst~\cite{fuerst_plock} proposed a priority lock that operates for
a maximum of 64 priority levels. The priority of a task is represented
using a 64-bit bitvector, replacing the need to use a priority
queue or other data structure such as a heap. After registering itself
using the bitvector, a task checks if there are any higher priority
waiters. If there are other waiters, the task spins, otherwise it
attempts to set itself as the holder using a compare-and-exchange
operation on a global pointer. BPL ensures a FIFO ordering among tasks
of equal priority, and also supports more than 64 priority levels if
so desired. 

The BPL approach differs from others in that it attempts to bound the
lock delay, while emphasizing task importance among lock waiters that
are clustered in the same batch. BPL exhibits low overheads for
uncontended lock acquisition. Similarly, the lock release overheads
experience the same O(1) time bound as that proposed by Johnson and
Harathi. We consider BPL more beneficial to real-time operating
systems guarded by big kernel locks. In such cases, the increase in
overhead compared to a simple spinlock is offset by the longer
duration of a critical section, compared to when using fine-grained
locks. Peters et al. show with empirical analysis that the use of a big
kernel lock is suitable in a microkernel~\cite{ukernel_bkl_fine}.

\section{Conclusions}\label{sect:conclusion}
This paper describes the design and implementation of the Batched Priority
Locking (BPL) algorithm. BPL uses batching to bound the time that any task waits
to acquire a spinlock, while also considering the priority of waiters within the
same batch to order lock acquisition. Simulation and empirical analysis confirms
that BPL reduces priority inversions compared to FIFO-ordered locks, while still
retaining the same bounded delay for lock access. In comparison, a strictly
priority-ordered lock reduces priority inversions at the cost of potentially
unbounded delay to lower priority tasks. This is especially prevalent in cases
of high lock contention. For real-time systems, locks that enforce bounded wait
delays are necessary. BPL has benefits over FIFO when there are
quality-of-service gains for reducing delay for higher priority lock contenders.
The implementation overheads of BPL are offset by the benefits gained,
especially in cases where critical sections are relatively long compared to the
cost of the lock acquisition and release. This is because multiple tasks are
grouped into the same batch, where they are ordered according to their
priority. Only when tasks are partitioned into batch sizes of 1, does the
algorithm degrade to FIFO ordering. 

For short-lived critical sections e.g., to do an atomic increment
such as \verb|x++|, the costs of using BPL reduce the benefits of the locking
mechanism. In our case, we see value for accessing a big kernel lock that could
have relatively long-lived critical control paths, and BPL minimizes priority
inversions in such cases, reducing latency for higher priority tasks. Here, the
lower delay for higher priority tasks yields reward in terms of being able to
make sure computations meet a minimum (mandatory) level of service, while
allowing for optional computations to additionally take place, time permitting,
to improve the quality of the result. Imprecise
computations~\cite{Liu:91,Liu:94} is an example that could benefit from BPL, as
applied to refinement tasks such as numerical integration, state estimation and
so forth.

Future work will consider the scalability limits of BPL on a single
kernel, versus an alternative approach where large core counts are
partitioned across multiple guest systems. For example, a single
system with $m$ cores contending for a single lock could be compared
to a partitioning hypervisor system, hosting $n$ guests each with
$m/n$ cores contending for a shared lock. How to assign tasks to
the same guest under such a scenario will also be investigated.


\bibliographystyle{ACM-Reference-Format}
\bibliography{references}

\end{document}